\crefname{lemma}{Lemma}{Lemmas}
\crefname{fact}{Fact}{Facts}
\newcommand{\colorconstraints}{\text{Color Constraints}}
\crefname{colorconstraints}{(color constraints)}{Color Constraints}
\crefname{indsetconstraints}{(indset constraints)}{IndSet Constraints}
\crefname{theorem}{Theorem}{Theorems}
\crefname{mtheorem}{Theorem}{Theorems}
\crefname{corollary}{Corollary}{Corollaries}
\crefname{claim}{Claim}{Claims}
\crefname{example}{Example}{Examples}
\crefname{algorithm}{Algorithm}{Algorithms}
\crefname{problem}{Problem}{Problems}
\crefname{definition}{Definition}{Definitions}
\newtheorem{theorem}{Theorem}[section]
\newtheorem{mtheorem}{Theorem}%[section]
\newtheorem*{theorem*}{Theorem}
\newtheorem*{proposition*}{Proposition}
\newtheorem{lemma}[theorem]{Lemma}
\newtheorem*{lemma*}{Lemma}
\newtheorem*{conjecture*}{Conjecture}
\newtheorem{fact}[theorem]{Fact}
\newtheorem*{fact*}{Fact}
\newtheorem*{hypothesis*}{Hypothesis}
\theoremstyle{definition}
\newtheorem{definition}[theorem]{Definition}
\newtheorem*{definition*}{Definition}
\theoremstyle{remark}
\newtheorem{claim}[theorem]{Claim}
\newtheorem*{claim*}{Claim}
\newtheorem{remark}[theorem]{Remark}
\newtheorem*{remark*}{Remark}
\newtheorem*{observation*}{Observation}
\let\mathbb\varmathbb
\newcommand{\Authornotecolored}[3]{}
\newcommand{\Authorcomment}[2]{}
\newcommand{\Authorfnote}[2]{}
\definecolor{forestgreen(traditional)}{rgb}{0.0, 0.27, 0.13}
\newcommand{\abs}[1]{\lvert#1\rvert}
\newcommand{\bigabs}[1]{\big\lvert#1\big\rvert}
\newcommand{\norm}[1]{\lVert#1\rVert}
\newcommand{\Esymb}{\mathbb{E}}
\newcommand{\Psymb}{\mathrm{Pr}}
\DeclareMathOperator*{\E}{\Esymb}
\DeclareMathOperator*{\ProbOp}{\Psymb}
\renewcommand{\Pr}{\ProbOp}
\newcommand{\defeq}{\stackrel{\mathrm{def}}=}
\newcommand\bdot\bullet
\DeclareMathOperator{\poly}{poly}
\newcommand{\Erdos}{Erd\H{o}s\xspace}
\newcommand{\Renyi}{R\'enyi\xspace}
\newcommand{\Lovasz}{Lov\'asz\xspace}
\newcommand{\Bollobas}{Bollob\'as\xspace}
\newcommand{\N}{\mathbb N}
\newcommand{\R}{\mathbb R}
\newcommand{\cP}{\mathcal P}
\newcommand{\cT}{\mathcal T}
\newcommand{\cV}{\mathcal V}
\renewcommand{\leq}{\leqslant}
\renewcommand{\geq}{\geqslant}
\let\epsilon=\varepsilon
\numberwithin{equation}{section}
\newcommand\MYcurrentlabel{xxx}
\newcommand{\MYstore}[2]{%
  \global\expandafter \def \csname MYMEMORY #1 \endcsname{#2}%
}
\newcommand{\MYload}[1]{%
  \csname MYMEMORY #1 \endcsname%
}
\newcommand{\MYnewlabel}[1]{%
  \renewcommand\MYcurrentlabel{#1}%
  \MYoldlabel{#1}%
}
\newcommand{\MYdummylabel}[1]{}
\newcommand{\torestate}[1]{%
  % overwrite label command
  \let\MYoldlabel\label%
  \let\label\MYnewlabel%
  #1%
  \MYstore{\MYcurrentlabel}{#1}%
  % restore old label command
  \let\label\MYoldlabel%
}
\newcommand{\restatetheorem}[1]{%
  % overwrite label command with dummy
  \let\MYoldlabel\label
  \let\label\MYdummylabel
  \begin{theorem*}[Restatement of \cref{#1}]
    \MYload{#1}
  \end{theorem*}
  \let\label\MYoldlabel
}
\newcommand{\restatelemma}[1]{%
  % overwrite label command with dummy
  \let\MYoldlabel\label
  \let\label\MYdummylabel
  \begin{lemma*}[Restatement of \cref{#1}]
    \MYload{#1}
  \end{lemma*}
  \let\label\MYoldlabel
}
\newcommand{\restateprop}[1]{%
  % overwrite label command with dummy
  \let\MYoldlabel\label
  \let\label\MYdummylabel
  \begin{proposition*}[Restatement of \cref{#1}]
    \MYload{#1}
  \end{proposition*}
  \let\label\MYoldlabel
}
\newcommand{\restatefact}[1]{%
  % overwrite label command with dummy
  \let\MYoldlabel\label
  \let\label\MYdummylabel
  \begin{fact*}[Restatement of \prettyref{#1}]
    \MYload{#1}
  \end{fact*}
  \let\label\MYoldlabel
}
\newcommand{\restate}[1]{%
  % overwrite label command with dummy
  \let\MYoldlabel\label
  \let\label\MYdummylabel
  \MYload{#1}
  \let\label\MYoldlabel
}
\newcommand{\eps}{\epsilon}
\DeclareMathOperator{\pE}{\tilde{\mathbb{E}}}
\DeclareMathOperator{\Span}{Span}
\def\Span#1{\textbf{Span}\left(#1  \right)}
\def\defeq{\stackrel{\mathrm{def}}{=}}
\def\abs#1{\left|#1  \right|}
\def\norm#1{\left\| #1 \right\|}
\newcommand{\M}{\mathcal{M}}
\newcommand{\cdeg}{\mathrm{cdeg}}
\newcommand{\ip}[1]{\langle #1 \rangle}
\newcommand{\numcolors}{k}
\newcommand{\numlb}{k_0}
\title{A Stress-Free Sum-of-Squares Lower Bound for Coloring}
\author{
   Pravesh K.\ Kothari\thanks{Carnegie Mellon University. Supported by NSF CAREER Award \#2047933.} \\ \texttt{praveshk@cs.cmu.edu} \and Peter Manohar\thanks{Carnegie Mellon University. Supported by the NSF Graduate Research Fellowship Program
and the ARCS Foundation. This material is based upon work supported by the National Science Foundation Graduate Research Fellowship Program under Grant No.\ DGE 1745016. Any opinions, findings, and conclusions or recommendations expressed in this material are those of the author(s) and do not necessarily reflect the views of the National Science Foundation.} \\ \texttt{pmanohar@cs.cmu.edu} }%
\date{\today}
\begin{document}

\pagestyle{empty}

% MAKE TITLE

\maketitle
\thispagestyle{empty} % seems to be required here to avoid page number on first page

% ABSTRACT

\begin{abstract}
We prove that with high probability over the choice of a random graph $G$ from the \Erdos-\Renyi distribution $G(n,1/2)$, a natural $n^{O(\epsilon^2 \log n)}$-time, degree $O(\epsilon^2 \log n)$ sum-of-squares semidefinite program cannot refute the existence of a valid $k$-coloring of $G$ for $k = n^{1/2 +\epsilon}$. Our result implies that the refutation guarantee of the basic semidefinite program (a close variant of the \Lovasz theta function) cannot be appreciably improved by a natural $o(\log n)$-degree sum-of-squares strengthening, and this is tight up to a $n^{o(1)}$ slack in $k$. To the best of our knowledge, this is the first lower bound for coloring $G(n,1/2)$ for even a single round strengthening of the basic SDP in any SDP hierarchy. 

Our proof relies on a new variant of instance-preserving \emph{non-pointwise complete reduction} within SoS from coloring a graph to finding large independent sets in it. Our proof is (perhaps surprisingly) short, simple and \emph{does not} require complicated spectral norm bounds on random matrices with dependent entries that have been otherwise necessary in the proofs of many similar results~\cite{DBLP:conf/focs/BarakHKKMP16,8104104,DBLP:journals/corr/abs-1907-11686,ghosh2020sumofsquares,10.1145/3357713.3384319}. 

Our result formally holds for a constraint system where vertices are allowed to belong to multiple color classes; we leave the extension to the formally stronger formulation of coloring, where vertices must belong to unique colors classes, as an outstanding open problem. 

\end{abstract}

\clearpage

% TOC

  % assumes microtype
  \microtypesetup{protrusion=false}
  \tableofcontents{}
  \microtypesetup{protrusion=true}

\clearpage

\pagestyle{plain}
\setcounter{page}{1}
%%%%%%%%%%%%%%%%%%%%%%%%%%%%%%%%%%%%%%%%%%%%%%%%%%%%%%%%%%%%%%%%%%%%%%%%%%%%%%%%
%%%%%%%%%%%%%%%%%%%%%%%%%%%%%%%%%%%%%%%%%%%%%%%%%%%%%%%%%%%%%%%%%%%%%%%%%%%%%%%%
%%%%%%%%%%%%%%%%%%%%%%%%%%%%%%%%%%%%%%%%%%%%%%%%%%%%%%%%%%%%%%%%%%%%%%%%%%%%%%%%
%!TEX root = ../main.tex
\renewcommand{\pE}{\tilde{\mathbb{E}}}
%%%%%%%%%%%%%%%%%%%%%%%%%%%%%%%%%%%%%%%%%%%%%%%%%%%%%%%%%%%%%%%%%%%%%%%%%%%%%%%%
%%%%%%%%%%%%%%%%%%%%%%%%%%%%%%%%%%%%%%%%%%%%%%%%%%%%%%%%%%%%%%%%%%%%%%%%%%%%%%%%
%%%%%%%%%%%%%%%%%%%%%%%%%%%%%%%%%%%%%%%%%%%%%%%%%%%%%%%%%%%%%%%%%%%%%%%%%%%%%%%%
\section{Introduction}

Starting with the seminal work of Arora, \Bollobas, \Lovasz and Tourlakis ~\cite{MR2322869-Arora06}, understanding the power of systematic hierarchies of linear and semidefinite programs for solving combinatorial optimization problems has been a foundational goal in complexity theory. This project has achieved many successes including sharp lower bounds for basic problems~\cite{MR2402454-Schoenebeck07,MR2780074-Charikar09,MR2648454-Raghavendra09,DBLP:conf/ipco/BenabbasM10,DBLP:conf/fsttcs/BenabbasCGM11,DBLP:journals/toc/BenabbasGMT12,DBLP:conf/coco/GhoshT17} in various hierarchies of linear and semidefinite programs~\cite{MR1806434-Lasserre00,parrilo2000structured, MR1061981-Sherali90, DBLP:journals/siamjo/LovaszS91} (see~\cite{MR2894694-Chlamtac12,TCS-086} for expositions).

However, proving lower bounds for the sum-of-squares (SoS) semidefinite programming hierarchy -- the strongest known hierarchy of efficiently solvable convex programs -- has achieved only a limited amount of success.  This is partially explained by the remarkable success of the SoS hierarchy in designing state-of-the-art algorithms for worst-case optimization problems such as max-cut~\cite{MR1412228-Goemans95}, sparsest cut~\cite{MR2535878-Arora09}, unique games on general~\cite{MR3424199-Arora15} and algebraic graphs~\cite{DBLP:journals/corr/abs-2006-09969,DBLP:journals/corr/abs-2011-04658},  quantum separability~\cite{DBLP:conf/stoc/BarakKS17} and more recently, a string of successes in high-dimensional algorithmic statistics including robust estimation of moments~\cite{DBLP:journals/corr/abs-1711-11581}, clustering spherical~\cite{HopkinsLi17,KothariSteinhardt17} and non-spherical mixture models~\cite{bakshi2020outlierrobust, DHKK20}, robust learning of all Gaussian mixtures~\cite{DBLP:journals/corr/abs-2012-02119, DBLP:journals/corr/abs-2011-03622}, list-decodable learning~\cite{DBLP:journals/corr/abs-1905-05679,DBLP:journals/corr/abs-1905-04660,DBLP:journals/corr/abs-2002-05139,raghavendra2020list}, tensor decomposition~\cite{DBLP:conf/focs/MaSS16}, and sparse~\cite{DBLP:conf/focs/dOrsiKNS20} and tensor principal component analysis~\cite{DBLP:conf/colt/HopkinsSS15}, among others. Indeed, given the remarkable power of the SoS method in designing algorithms for such \emph{average-case} settings, SoS lower bounds (and related restricted algorithmic techniques such as the low-degree polynomial method~\cite{8104104,DBLP:conf/focs/HopkinsS17,DBLP:journals/corr/abs-1907-11636}) are increasingly used to ascertain average-case hardness and \emph{algorithmic thresholds}. 

In the last few years, there has been some progress in proving sum-of-squares lower bounds for average-case problems~\cite{MR1880908-Grigoriev01,MR1875740-Grigoriev02,DBLP:conf/focs/Schoenebeck08,MR2780076-Tulsiani09,MR3388187-Barak15,DBLP:conf/focs/BarakHKKMP16,DBLP:conf/stoc/KothariMOW17,ghosh2020sumofsquares,10.1145/3357713.3384319}. However, such progress has come about via fairly technical\footnote{Almost all recent analyses run into $\sim$ 50 pages!}, problem-specific arguments and a host of natural questions, e.g.\ combinatorial optimization on sparse random graphs, remain out of reach of current techniques. In particular, a central challenge in this line of work has been to analyze the sum-of-squares semidefinite programs for refuting the existence of a $\numcolors$-coloring in \Erdos-\Renyi random graphs. Classical works~\cite{MR951992} in probability showed that the chromatic number of $G \sim G(n,1/2)$\footnote{Recall $G \sim G(n,1/2)$ is a graph on $n$ vertices where each edge $\{i,j\}$ is independently included with probability $1/2$.} is tightly concentrated around $n/2 \log_2 n$. However, the best known polynomial time algorithm (corresponding to the degree 2 SoS relaxation, a close variant of the famous \Lovasz theta function) can only \emph{refute} the existence of a $\sqrt{n}$-coloring in such random graphs\footnote{We note that a close variant of \Lovasz-theta function is also a crucial component in the current state-of-the-art algorithms for worst-case coloring of $k$-colorable graphs with a small polynomial number of colors~\cite{DBLP:journals/jacm/KargerMS98,MR2277147-Arora06,DBLP:conf/focs/Chlamtac07}.}. While it is natural to guess that higher-degree relaxations yield no significant improvement, establishing this has proved to be an elusive goal. Indeed, even the easier goal of establishing sharp SoS lower bounds for the clique number of $G \sim G(n,1/2)$ required~\cite{DBLP:conf/focs/BarakHKKMP16} the introduction of \emph{pseudo-calibration} -- a technique that has found several further uses in establishing SoS lower bounds for average-case problems. However, analyzing lower bound constructions based on pseudo-calibration requires understanding the spectra of complicated random matrices with dependent random entries. While this has been accomplished for a few select examples~\cite{8104104,ghosh2020sumofsquares}, the case of graph coloring seems to be particularly unwieldy and has thus resisted progress so far. 

In this paper, we establish a tight SoS lower bound for a natural higher-degree SoS relaxation of the graph coloring problem in $G(n,1/2)$. Our proof circumvents pseudo-calibration entirely. Instead, we exhibit a \emph{non-pointwise complete reduction} -- a notion of reductions that departs from the standard framework introduced by Tulsiani~\cite{MR2780076-Tulsiani09} (and used in \cite{DBLP:conf/soda/BhaskaraCVGZ12}) -- that  obtains a lower bound for the coloring problem from a lower bound for the independent set problem (see Section~\cref{sec:tul} for a detailed discussion). Somewhat surprisingly, our analysis does \emph{not} require spectral analysis of complicated random matrices and instead succeeds whenever the lower bound construction for the independent set problem satisfies some natural covering properties. Our main result then follows by verifying these properties for the construction of \cite{DBLP:conf/focs/BarakHKKMP16}.

\subsection{Results}
Our results apply to the following polynomial constraint system in the real-valued variables $\{x_{i,c}\}_{i \in[n], c \in [k]}$ that is satisfiable if and only if the graph $G$ is $k$-colorable.

\begin{center}
\begin{subequations}
\label[colorconstraints]{colorconstraints}
\boxed{
\begin{aligned}
\textrm{\colorconstraints} \\
x_{i,c}^2 = x_{i,c} &\text{ for all } i \in [n], c \in [k] & \ \text{(Booleanity Constraints)} \\
x_{i,c} x_{j,c} = 0 &\text{ for all } c\in [k] \text{ and } \{i,j\} \in E(G) & \ \text{(Edge Constraints)}  \\
\sum_{c} x_{i,c} \geq 1 &\text{ for all } i \in [n] & \ \text{(Sum Constraints)}
\end{aligned}
}
\end{subequations}
\end{center}
 In \cref{colorconstraints}, the variable $x_{i,c}$ represents the $0$-$1$ indicator of whether the $i$ vertex is in the $c$-th color class.
 The booleanity constraints enforce that $x_{i,c} \in \{0,1\}$, the edge constraints enforce that if $\{i,j\} \in E(G)$, then the subset of colors assigned to $i$ is disjoint from the subset of colors assigned to $j$, and the sum constraints enforce that each vertex is in at least one color class.

\cref{colorconstraints} allow for a vertex to be in \emph{more than one} color class. Our lower bound technique does not currently succeed for the related set of constraints where each vertex must belong to exactly one color class. See \cref{sec:weak-vs-strong} for a discussion on the difference between the formulations.

Our main result shows that with high probability over the draw of $G \sim G(n,1/2)$, the degree $O(\eps^2 \log n)$ SoS proof system cannot refute \cref{colorconstraints} for $G$ when $k = n^{\frac{1}{2} + \eps}$. 

\begin{mtheorem}
\label{thm:coloringlb}
Let $n$ be sufficiently large positive integer and $\eps \in (\Omega(\sqrt{\frac{1}{\log n}}), \frac{1}{2})$. Then, for $k = n^{\frac{1}{2} + \eps}$ and $d = O(\eps^2 \log n)$, with probability $1 - 1/\poly(n)$ over the draw of $G \sim G(n,1/2)$, the $n^{O(d)}$-time, degree $d$ sum-of-squares relaxation of \cref{colorconstraints} cannot refute the existence of a $k$-coloring of $G$. 
\end{mtheorem}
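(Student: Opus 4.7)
My plan is to construct, with high probability over $G \sim G(n,1/2)$, a degree-$d$ pseudo-expectation $\pE$ on $\{x_{i,c}\}_{i \in [n], c \in [k]}$ satisfying \cref{colorconstraints}, for $d = O(\eps^2 \log n)$ and $k = n^{1/2+\eps}$. The \emph{non-pointwise} reduction I would attempt is to take $k$ ``independent copies'' of the BHKKMP~\cite{DBLP:conf/focs/BarakHKKMP16} pseudo-expectation for large independent sets in the same graph $G$, one copy per color class, since there is no polynomial-in-$y$ map from a single independent-set indicator to a full $k$-coloring.

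Concretely, I set $s = n^{1/2-\eps}$ so that $ks = n$, and let $\pE_{\mathsf{IS}}$ be the degree-$d$ BHKKMP pseudo-expectation on $\{0,1\}^n$ for independent sets of size $s$ in $G$; by their result, such a $\pE_{\mathsf{IS}}$ exists w.h.p.\ for the chosen $d$. I define the candidate pseudo-expectation on product monomials by
\[
\pE\bigl[\tprod_{c=1}^{k} f_c(x_{\cdot,c})\bigr] \;=\; \tprod_{c=1}^{k} \pE_{\mathsf{IS}}[f_c],
\]
and extend linearly. Booleanity and the Edge constraints of \cref{colorconstraints} follow color-by-color from the corresponding properties of $\pE_{\mathsf{IS}}$, since $x_{i,c}^2 - x_{i,c}$ and $x_{i,c} x_{j,c}$ for $\{i,j\} \in E(G)$ involve only the $c$-th tensor slot and are killed there.

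The heart of the argument is the Sum constraint $\pE[\sigma^2(\sum_c x_{i,c} - 1)] \geq 0$ for every SoS $\sigma$ of degree $\leq d-1$. This is \emph{not} satisfied pointwise for $k$ genuine IID independent sets of average size $s = n/k$, since a fixed vertex is left uncovered with probability $\approx 1/e$; the non-pointwise gain is that the inequality only needs to hold at the pseudo-expectation level. My plan is to isolate a \emph{vertex-uniform covering property} of $\pE_{\mathsf{IS}}$: for every vertex $i$ and every low-degree polynomial $p(y)$,
\[
\pE_{\mathsf{IS}}[p(y)\cdot y_i] \;\approx\; \tfrac{s}{n}\cdot \pE_{\mathsf{IS}}[p(y)].
\]
Expanding $\sigma^2$ in the tensor basis across colors, this lifts to $\pE[\sigma^2\cdot x_{i,c}] \approx (s/n)\,\pE[\sigma^2]$ for each $c$, and summing over $c \in [k]$ with $ks=n$ then yields the Sum constraint, with the assumption $\eps = \Omega(1/\sqrt{\log n})$ providing exactly the slack needed to absorb the approximation error.

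The hard part will be verifying the covering property for the BHKKMP pseudo-expectation itself, since for a fixed random graph $G$ the per-vertex marginals of $\pE_{\mathsf{IS}}$ are not literally $s/n$ but fluctuate with the local structure of $G$ at vertex $i$. My plan is to read the covering property off the pseudo-calibration formula directly: the planted distribution on $(G,S)$ is vertex-symmetric in expectation over $G \sim G(n,1/2)$, so it suffices to show that the low-degree Fourier fluctuations of $\pE_{\mathsf{IS}}[p(y) y_i]$ around the symmetric mean $(s/n)\pE_{\mathsf{IS}}[p(y)]$ are small. If, as the authors advertise, these fluctuations are controllable by elementary second-moment estimates rather than by spectral norm bounds on the complicated graph matrices that underlie previous SoS lower bound constructions, then the tensor product above is already a valid pseudo-expectation for \cref{colorconstraints}, and the theorem follows.
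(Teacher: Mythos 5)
Your broad plan---tensor $k$ copies of the BHKKMP independent-set pseudo-expectation for the same graph $G$, inherit Booleanity and Edge constraints slot-by-slot, and attack the Sum constraints as the hard part---is indeed the paper's approach. But the mechanism you propose for the Sum constraints has two serious gaps, and the paper's actual argument is quite different.

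First, the ``vertex-uniform covering property'' you posit, $\pE_{\mathsf{IS}}[p(y)\, y_i] \approx (s/n)\, \pE_{\mathsf{IS}}[p(y)]$ for all low-degree $p$, is false. Take $p(y)=y_i$: the left side is $\pE_{\mathsf{IS}}[y_i^2]=\pE_{\mathsf{IS}}[y_i]\approx s/n$, while the right side is $\approx (s/n)^2$. More generally, once $\sigma^2$ is allowed to involve $x_{i,c}$ itself, the ``conditional marginal'' of $x_{i,c}$ given $\sigma^2$ is of order $1$, not $s/n$, so the lift $\pE[\sigma^2 x_{i,c}]\approx (s/n)\pE[\sigma^2]$ fails. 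The paper's ``covering'' hypothesis in \cref{thm:reduction} is the much weaker statement $\pE[x_i]\geq 1/k_0$ (i.e.\ only for $p\equiv 1$), supplemented by a second, crucial, spectral hypothesis: $\pE[f^2]\geq \lambda\,\norm{\Pi_G f}_2^2$. Your proposal has no analogue of this eigenvalue lower bound, and it is doing almost all the work.

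Second, your choice $k = n/s$ (so $ks=n$ exactly) is not enough. You correctly note that with this $k$ a fixed vertex is uncovered with probability $\approx 1/e$ under IID sampling, but then assert that moving to the pseudo-expectation level removes the issue; it does not. With $ks=n$, even the simplest consequence $\pE[\sum_c x_{i,c}]\geq 1$ is only an approximate equality, and for non-constant $\sigma$ the inequality $\pE[\sigma^2(\sum_c x_{i,c}-1)]\geq 0$ can fail. The paper instead \emph{over-provisions}: it takes $k = O(k_0\, d\,\log(n^d/\lambda))$, a $\polylog(n)$ factor more colors than $k_0 = n/s$, so that the pseudo-probability of the ``bad event'' $h_i := \mathbf{1}[\sum_c x_{i,c}\leq 1]$ is $\pE^{\otimes k}[h_i]\approx k\, e^{-k/k_0}$, which is made $n^{-\Omega(d^2)}$-small. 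The Sum constraint is then proved by splitting $\sum_c x_{i,c}$ along $h_i$ and $1-h_i$, showing $\pE^{\otimes k}[f^2(1-h_i)(\sum_c x_{i,c}-2)]\geq 0$ directly (\cref{claim:indicatorsos}), and bounding the cross-term $\pE^{\otimes k}[f^2 h_i]$ by Cauchy--Schwarz together with a $2\to 4$ hypercontractive inequality (\cref{claim:hypercontractivity}). That hypercontractivity, in turn, rests on the eigenvalue lower bound and on a nontrivial spectral fact (\cref{lem:eiglb}): the smallest relevant eigenvalue of $\pE^{\otimes k}$ on coloring-degree-$\leq d$ polynomials decays only like $(\lambda n^{-O(d)})^{O(d)}$, not exponentially in $k$. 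None of these ingredients---the $h_i$ decomposition, the over-provisioning of $k$, the hypercontractivity, or the tensor-power eigenvalue bound---appear in your sketch, and without them the Sum constraint argument does not go through.
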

Equivalently, \cref{thm:coloringlb} says that with high probability over $G \sim G(n,1/2)$, \cref{colorconstraints} do not admit an $O(\eps^2 \log n)$-degree positivstellensatz refutation when $k = n^{\frac{1}{2}+\eps}$. As was formally verified in \cite{DBLP:journals/siamcomp/BanksKM19}\footnote{\cite{DBLP:journals/siamcomp/BanksKM19} proved this equivalence for a slightly different formulation of \cref{colorconstraints}, which we will discuss in \cref{sec:weak-vs-strong}. However, the same proof works even for \cref{colorconstraints}.}, a degree $2$ coloring pseudo-expectation is equivalent to a vector solution with value at least $k$ to the semidefinite program that computes the \Lovasz theta function. To the best of our knowledge, this result gives the first lower bound for $\omega(1)$ rounds (or even a single round of strengthening of the basic SDP) in a natural SDP hierarchy.
 
\begin{remark}[Tightness of \Cref{thm:coloringlb}] It is well-known~\cite{DBLP:journals/cpc/Coja-Oghlan05,DBLP:conf/approx/BanksKM17} that the degree 2 sum-of-squares relaxation of \cref{colorconstraints} can refute the existence of $k$-coloring in $G \sim G(n,1/2)$ for $k = O(\sqrt{n})$. Thus, our lower bound in \Cref{thm:coloringlb} is tight up to a $n^{\eps}$ factor in $k$. On the other hand, we give a simple proof in \cref{sec:independentsetrefutation} that shows that the degree $8(1 + o(1)) \log_2 n$ SoS relaxation of \cref{colorconstraints} succeeds in refuting the existence of a $k$-coloring in $G(n,1/2)$ (w.h.p.) for the nearly optimal~\cite{MR951992} bound of $k \leq \frac{n}{e \cdot 2(1 + o(1))\log_2 n}$. Hence, the upper bound on $d$ in \cref{thm:coloringlb} is tight up to constants.
\end{remark}

\subsection{A non-pointwise complete SoS reduction from coloring to independent set} 
Using standard SDP duality, proving \cref{thm:coloringlb} is equivalent to proving the existence of a dual witness called a pseudo-expectation defined below (see lecture notes~\cite{BarakS16} and the monograph~\cite{TCS-086} for background). 

\begin{definition}[Pseudo-expectation for Coloring]
A degree $d$ coloring pseudo-expectation $\pE$ for $G$ using $\numcolors$ colors is a linear operator that maps polynomials of degree $\leq d$ in variables $\{x_{i,c}\}_{ i \in [n], c \in [k]}$ to $\R$, satisfying the following three properties:
\begin{enumerate}
	\item \textbf{Normalization: } $\pE[1]  = 1$,
	\item \textbf{Positivity: } $\pE[f^2] \geq 0$ for every polynomial $f$ of degree at most $d/2$,
	\item \textbf{Coloring Constraints}: $\pE$ satisfies \cref{colorconstraints}. \begin{enumerate}[(a)] 
	\item for every polynomial $f$ of degree at most $d-2$, $\pE[f \cdot (x_{i,c}^2 - x_{i,c})] = 0$,
	\item for every polynomial $f$ of degree at most $d-2$ and any edge $\{i, j\} \in E(G)$, $\pE[f \cdot x_{i,c} x_{j,c}] = 0$,
	 \item for every polynomial $f$ of degree at most $\frac{d-1}{2}$, $\pE[f^2 \cdot (\sum_{c \leq k} x_{i,c} -1)] \geq 0$. 
	\end{enumerate}
	\end{enumerate}
\end{definition}
In order to prove \cref{thm:coloringlb}, it suffices to show that with high probability over the draw of $G \sim G(n,1/2)$, there is a degree $O(\eps^2 \log n)$ coloring pseudo-expectation for the graph $G$ that uses $\numcolors = n^{\frac{1}{2}+ \eps}$ colors. Somewhat surprisingly, we prove the existence of such a pseudo-expectation essentially without any random matrix analysis. Instead, we construct a coloring pseudo-expectation $\pE'$ for $G$ from a pseudo-expectation $\pE$ satisfying the related independent set constraints for \emph{the same graph $G$} whenever $\pE$ satisfies two additional natural ``covering'' properties. We recall the definition of an independent set pseudo-expectation below.

\begin{definition}[Pseudo-expectation for Independent Set]
A degree $d$ independent set pseudo-expectation $\pE$ is a linear operator that maps polynomials of degree $\leq d$ in variables $\{x_i\}_{i \in[n]}$ to $\R$, satisfying the following three properties:
\begin{enumerate}
	\item \textbf{Normalization: } $\pE[1]  = 1$,
	\item \textbf{Positivity: } $\pE[f^2] \geq 0$ for every polynomial $f$ of degree at most $d/2$,
	\item \textbf{Independent Set Constraints:} For every polynomial $f$ of degree at most $d-2$, $\pE[f \cdot (x_{i}^2 - x_{i})] = 0$ and $\pE[f \cdot x_{i} x_{j}] = 0$ for any edge $\{i,j\} \in E(G)$.
	\end{enumerate}
\end{definition}

Our main result that constructs a reduction from coloring to independent set is described below.
\begin{mtheorem}
\label{thm:reduction}
Let $G$ be a graph on $n$ vertices, and let $\pE$ be a degree $d$ independent set pseudo-expectation. Suppose further that $\pE$ satisfies the two ``covering'' properties: \begin{inparaenum}[(1)] \item $\pE[x_i] \geq \frac{1}{\numlb}$ for some integer $\numlb$, and \item there exists $\lambda \in \R_{> 0}$ such that for all multilinear $f$ with $\deg(f) \leq d/2$, $\pE[f^2] \geq \lambda \norm{\Pi_G f}_2^2$, where $\Pi_G$ is the projection of $f$ onto the linear subspace orthogonal to $\{g x_i x_j : \{i,j\} \in E(G), \deg(g) \leq d-2\}$ (viewed as a subset of coefficient vectors of polynomials with the Euclidean inner product), and $\norm{f}_2$ denotes the $\ell_2$ norm of the polynomial of $f$, viewed as a coefficient vector\end{inparaenum}. 
Then, there is a degree $d' := 1 + d/2$ coloring pseudo-expectation $\pE'$ using $\numcolors = O(\numlb d \log (n^d/\lambda))$ colors.
\end{mtheorem}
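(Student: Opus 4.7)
I will construct $\pE'$ as the $k$-fold tensor product of $\pE$, one independent copy per color, and verify the coloring axioms in turn. Every multilinear monomial $M$ in $\{x_{i,c}\}_{i \in [n], c \in [k]}$ decomposes uniquely as $M = \prod_{c \in [k]} M_c$, where $M_c$ involves only the color-$c$ variables. Viewing each $M_c$ as a monomial in $\{x_i\}_{i \in [n]}$, set $\pE'[M] := \prod_{c \in [k]} \pE[M_c]$ and extend linearly to polynomials of degree at most $d' = 1 + d/2$. Normalization is immediate; positivity holds because the Gram matrix of $\pE'$ is the $k$-fold Kronecker product of that of $\pE$, and Kronecker products of PSD matrices are PSD. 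Booleanity $x_{i,c}^2 = x_{i,c}$ and the edge constraints $x_{i,c}x_{j,c} = 0$ for $\{i,j\} \in E(G)$ each involve a single color, so they follow from the corresponding axioms of $\pE$ applied inside that color's factor.

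The only substantive axiom is the sum constraint: for every $f$ of degree $\leq (d'-1)/2 = d/4$ and every vertex $i$, show $\pE'[f^2(\sum_c x_{i,c} - 1)] \geq 0$. Expanding $f = \sum_\alpha f_\alpha m_\alpha$ in the monomial basis and using the product structure of $\pE'$, the $(\alpha,\beta)$ entry of the resulting quadratic form in $(f_\alpha)$ is $\sum_c \pE[x_i m_{\alpha,c} m_{\beta,c}] \prod_{c'\neq c}\pE[m_{\alpha,c'} m_{\beta,c'}] - \prod_c \pE[m_{\alpha,c} m_{\beta,c}]$. Viewed operator-theoretically, this is $\langle f, (\sum_c A_i^{(c)} - I)f\rangle_{M^{\otimes k}}$, where $M$ is the Gram matrix of $\pE$, $A_i$ is the self-adjoint idempotent ``multiply by $x_i$'' under the $\pE$-inner product (self-adjoint because $\pE[(x_i f)g] = \pE[f(x_i g)]$, idempotent because $x_i^2 = x_i$), and $A_i^{(c)}$ acts as $A_i$ on the $c$-th tensor factor and as identity elsewhere. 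Since the $A_i^{(c)}$ are commuting projections, $\sum_c A_i^{(c)} - I$ has spectrum in $\{-1, 0, 1, \ldots, k-1\}$ with the $(-1)$-eigenspace equal to $\ker(A_i)^{\otimes k}$, so the only way this quadratic form can be negative is if $f$'s $M^{\otimes k}$-mass concentrates on that joint-kernel direction.

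Ruling this out is where the two covering hypotheses combine. Property 1 provides a bulk positive contribution: the ``all-$1$s'' eigenspace of $\{A_i^{(c)}\}_c$ carries total pseudo-mass at least $k\pE[x_i] - 1 \geq k/\numlb - 1$ on the bilinear form $\pE'[f(\sum_c x_{i,c}) - f]$. Property 2 bounds the negative contribution: the spectral lower bound $\pE[g^2] \geq \lambda \|\Pi_G g\|_2^2$ forces $\pE$'s mass on $\ker(A_i)$ to fall short of the full mass by a $\lambda$-sized factor in every direction orthogonal to the edge ideal, and tensoring $k$ copies damps the joint-kernel mass down by $(1-\Omega(\lambda))^k$. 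A union bound over the $n^{O(d)}$ monomial pairs of degree $\leq d$ then shows that for $k = \Theta(\numlb d \log(n^d/\lambda))$ the suppression overwhelms the $n^{O(d)}$ loss, and the sum constraint follows. The main obstacle is making the joint-kernel estimate quantitative in a way that interacts cleanly with the $M^{\otimes k}$-inner product rather than the Euclidean one: converting the abstract spectral bound on $\pE$ into an effective upper bound on the $\pE'$-weight on $\ker(A_i)^{\otimes k}$ requires tracking how the joint-kernel projection interacts with both the tensor structure of $M^{\otimes k}$ and the edge-ideal subspace in Property 2, and it is this coupling that produces the $\log(n^d/\lambda)$ factor, while the factor of $d$ accounts for test polynomials of degree up to $d/4$.
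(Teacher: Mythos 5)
Your construction (the $k$-fold tensor $\pE^{\otimes k}$, restricted to total degree $1+d/2$) and the verification of normalization, positivity, booleanity, and edge constraints match the paper exactly, and your reframing of the sum constraint as a quadratic form $\langle f,(\sum_c A_i^{(c)}-I)f\rangle_{M^{\otimes k}}$ with the $(-1)$-eigenspace equal to the joint kernel of the multiplication operators is a correct and illuminating way to see what needs to be ruled out. However, the quantitative step is both hand-waved and partially misattributed, and this is where the real content of the theorem lives.

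Concretely: the suppression of the joint-kernel mass $\pE^{\otimes k}[\prod_c(1-x_{i,c})]\leq(1-1/\numlb)^k\leq e^{-k/\numlb}$ comes entirely from Property~1, not from Property~2 as your sketch asserts (``damps the joint-kernel mass down by $(1-\Omega(\lambda))^k$''). The role of Property~2 is different and is the missing key idea: the raw estimate $\pE^{\otimes k}[\prod_c(1-x_{i,c})]\leq e^{-k/\numlb}$ only controls the case $f=1$. For general test polynomials $f$ you need to bound $\pE^{\otimes k}[f^2\prod_c(1-x_{i,c})]$ relative to $\pE^{\otimes k}[f^2]$, and $f^2$ can a priori be strongly correlated with the bad indicator. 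The paper decouples them by Cauchy--Schwarz, $\pE^{\otimes k}[f^2 h_i]\leq\sqrt{\pE^{\otimes k}[f^4]\,\pE^{\otimes k}[h_i]}$, and then uses a hypercontractive inequality $\pE^{\otimes k}[f^4]\leq(n^{O(d)}/\lambda)^{2\deg f}\,\pE^{\otimes k}[f^2]^2$ (their Lemma~\ref{claim:hypercontractivity}) to control $\pE^{\otimes k}[f^4]$; \emph{this} is where Property~2 enters, via an eigenvalue lower bound for $\pE^{\otimes k}$ on the subspace of low total degree (Lemma~\ref{lem:eiglb}). Your ``union bound over the $n^{O(d)}$ monomial pairs'' does not substitute for this: without a $2\to4$ hypercontractivity statement you cannot rule out a test polynomial $f$ whose $M^{\otimes k}$-mass concentrates on the joint kernel. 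A secondary issue is that treating $A_i$ as a self-adjoint idempotent in the $\pE$-inner product is not straightforwardly legitimate for a degree-$d$ pseudo-expectation: multiplication by $x_i$ shifts degree, so $A_i$ does not map the degree-$\leq d/2$ polynomial space to itself, and the bilinear form $\pE[\cdot\,\cdot]$ is both degree-limited and degenerate. The paper avoids this by tracking ``coloring degree'' explicitly and manipulating concrete polynomials rather than operators.
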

\cref{thm:coloringlb} follows by verifying (see \cref{sec:fmain}) that the independent set pseudo-expectation constructed in~\cite{DBLP:conf/focs/BarakHKKMP16} satisfies the hypotheses of \cref{thm:reduction} with $\numlb = n^{\frac{1}{2} + \eps}$ and $\lambda = n^{-O(d)}$.

\cref{thm:reduction} holds for \emph{every} graph $G$ that admits an independent set pseudo-expectation satisfying the two additional covering properties. Hence, \Cref{thm:reduction} gives a reduction ``within SoS'' from the problem of coloring $G$ to the problem of finding a large independent set in $G$. As a consequence of the modularity of \cref{thm:reduction}, we have also reduced the task of proving SoS lower bounds for coloring for $G(n,p)$ with $p \ll 1/2$ to the task of ``merely'' proving a similar lower bound for independent set for $G(n,p)$. The latter task, though challenging, appears significantly less daunting than attacking coloring directly. 

To understand the two covering properties intuitively, note that even in ``real-life'' the existence of a single large independent set (say of size $\sim n/k$) does not imply the existence of a $k$-coloring of $G$. However, the existence of a $k$-coloring follows if we can prove that there is a collection of $\numcolors$ independent sets that \emph{cover} all vertices of $G$. The conditions appearing in \cref{thm:reduction} can be thought of as forcing two ``low-degree'' consequences of such a uniform covering property on the pseudo-expectation for independent sets. Informally, the first constraint says that each vertex $i$ appears in the independent set with reasonable probability, and the second constraint says that the minimum eigenvalue of $\pE$ is not too small, once we ignore polynomials that are required to have pseudo-expectation $0$ due to the independent set constraints.

\subsection{Comparison with Tulsiani's framework} 
\label{sec:tul}

Our proof of \cref{thm:reduction} requires a notion of reduction that departs from the standard framework introduced in~\cite{DBLP:conf/stoc/Tulsiani09}. Tulsiani's method\footnote{What follows is an equivalent description of Tulsiani's work in the language of pseudo-expectations.} uses a \emph{pointwise complete} reduction from problem $B$ to problem $A$ to construct a pseudo-expectation consistent with a polynomial formulation for $B$ from a pseudo-expectation consistent with a polynomial formulation for $A$. Specifically, a \emph{pointwise complete} SoS reduction from problem $B$ to problem $A$ is a map from instances $I_A$ of problem $A$ to instances $I_B$ of problem $B$, along with a ``solution map'' $x \mapsto y$ %from an instance $I_A$ of $A$ to an instance $I_B$ of $B$
that takes any solution $x$ of instance $I_A$ into a solution $y$ of instance $I_B$ that, in addition, satisfies: \begin{inparaenum}[(1)] \item each entry of the solution map $x\rightarrow y$ is computable by low-degree polynomials, and \item there is a ``low-degree sum-of-squares proof'' that if $x$ satisfies the constraint system $A$ for instance $I_A$ then $y$ satisfies the constraint system $B$ for instance $I_B$ \end{inparaenum}. In particular, if $y_i = p_i(x)$ for each $i$ for polynomials $p_1, p_2, \ldots$ of degree most $d_1$, then the framework allows us to transform a degree $d$ pseudo-expectation consistent with $A$ into a degree $\approx d/d_1$ pseudo-expectation consistent with $B$.  Tulsiani used this machinery to prove several SoS lower bounds for \emph{worst-case} combinatorial optimization problems such as constraint satisfaction, vertex cover, independent set and coloring. 

In average-case settings, however, we need tight control over the map between instances $I_A$ and $I_B$ in order to obtain a lower bound that applies to the target distribution over the instances of problem $B$. This makes Tulsiani's method not directly applicable to our setting since (if we insist on instance-preserving reductions) there is provably no pointwise complete, instance-preserving reduction from $k$-coloring to independent set. This is because the existence of a large independent set in $G$ does not, in general, imply the existence of a valid coloring of $G$ with a small number of colors. Instead, as we discuss next, our reduction directly maps a pseudo-expectation for independent set into a pseudo-expectation for coloring as long as the pseudo-expectation for independent set satisfies the additional uniform covering conditions.  

\subsection{Proof overview: coloring by repeated sampling} 
We describe our construction and a couple of main insights that go into the proof of \cref{thm:reduction} here. These ideas make the proof of \cref{thm:coloringlb} ``stress-free'': they allow us to completely sidestep the technical complexity of analyzing constructions based on pseudo-calibration that involve computing the spectra of certain random matrices (called graphical matrices) for proving SoS lower bounds. 

 We begin by describing the conceptual heart of the idea. In order to do this, it is helpful to consider the thought experiment (and very special case!) where the independent set pseudo-expectation $\pE$ is in fact the expectation operator  $\E_{\mu}$ associated with some \emph{distribution} $\mu$ on independent sets of $G$. Further, suppose that $\E_{\mu}[x_i] = \Pr[i \in S] \geq \frac{1}{\numlb}$. Then, observe that we can immediately derive that $G$ must be $k$-colorable with $O(\numlb \log n)$ colors. In fact, we can produce a simple, explicit probability distribution on $k$-colorings of $G$: independently sample $\numcolors$ independent sets $S_1, \dots, S_{\numcolors}$ from $\mu$ and set each of them to be a new color class. Observe that the chance that a certain vertex is not included in any of the $S_i$'s is $\Pr[i \notin \cup_{j = 1}^{\numcolors} S_j] = (1 - \frac{1}{\numlb})^{\numcolors} \leq e^{-\numcolors/\numlb} \ll \frac{1}{n}$ for $\numcolors = O(\numlb \log n)$, and hence by a union bound, using the $S_i$'s as color classes gives a valid $k$-coloring of $G$ with high probability.\footnote{Note that a vertex $i$ will, with high probability, belong to multiple color classes. In order to obtain a valid $\numcolors$-coloring, we simply remove each vertex from all but one of its assigned color classes.} To get a distribution $\mu'$ entirely supported over $\numcolors$-colorings of $G$, one can simply sample $S_1, \dots, S_{\numcolors}$ from $\mu$ \emph{conditioned} on the high probability event that $\cup_{j = 1}^{\numcolors} S_j = V(G)$. 

Our key idea is to replicate this ``independent sampling'' step within the sum-of-squares framework. For pseudo-expectations, independent sampling produces a pseudo-expectation on a tuple of $\numcolors$ independent sets given by the $\numcolors$-th tensor $\pE^{\otimes \numcolors}$. However, there is no natural way to perform the final ``conditioning'' step for low-degree pseudo-expectations\footnote{There is a natural and standard way to import ``conditioning'' of probability distributions into the SoS framework via ``polynomial reweightings'' (see ~\cite{DBLP:conf/stoc/BarakKS17} for a formal treatment of such reweightings). However, the relevant polynomial $\prod_{i} (1 - \prod_{c} (1 - x_{i,c}))$ in our case has degree $n \numcolors$, and so we would need the independent set pseudo-expectation to have degree $\gg n$ in order for the reweighting to be well-defined!}, which for distributions is the simplest way to ensure the ``covering property'', that is,  $i \in \cup_{j = 1}^{\numcolors} S_j$ for every $i$, or equivalently to make $\E_{\mu'}$ satisfy the sum constraints $\sum_c x_{i,c} \geq 1$ for every $i$. 

The sampling analogy suggests a way out, however: observe that when one draws $S_1, \dots, S_{\numcolors}$ from an actual probability distribution $\mu$ on independent sets that satisfies $\E_{\mu} x_i \geq 1/\numlb$, we expect each $i$ to be in not just one but in fact in $\frac{k}{\numlb} = O(\log n)$ of the subsets. Equivalently, we expect that $\E_{\mu} \sum_c x_{i,c} = \Omega(\log n)$. Because this expectation is large, if low-degree polynomials of $\mu$ are sufficiently well-concentrated around their expectations, we may expect that the influence of the points $x$ in the support of $\mu$ where $\sum_{c} x_{i,c} \leq 1 \ll \E_{\mu} \sum_{c} x_{i,c}$ to be small. Thus, one may hope that expectations of low-degree ($\deg \leq d$) polynomials cannot ``distinguish'' between distributions $\mu$ where every point in the support of $\mu$ satisfies $\sum_c x_{i,c} > 1$ versus those where the probability of $\sum_c x_{i,c} = 0$ is non-zero for some $i$. In that case, one might expect $\pE^{\otimes k}$ to satisfy the sum constraints. 

Our actual proof establishes precisely such a statement even for pseudo-distributions whenever the smallest nontrivial eigenvalue of the pseudo-moment matrix of the independent set pseudo-expectation $\pE$ is not too small. We show that this condition implies a non-trivial eigenvalue lower bound for the $k$-fold tensor power of $\pE$ on polynomials of total degree\footnote{Notice that $\pE^{\otimes \numcolors}$ is defined and even positive semidefinite on a \emph{larger} subspace of polynomials that includes some of total degree $\sim kd$!} $d$. A direct argument relying on spectra of the tensor product of matrices yields an estimate that decays exponentially in $k$, which is too weak for us. Instead, we show that the smallest eigenvalue of $\pE^{\otimes k}$ \emph{when restricted to the subspace of polynomials of total degree $\leq d$} decays only as an exponential in $d \log n$. While eventually elementary, this argument is both crucial and somewhat technical and is presented in full in~\cref{sec:eiglb}. 

Intuitively, a good enough lower bound on the smallest non-zero eigenvalue of $\pE^{\otimes k}$ on the relevant subspace of polynomials is our ``surrogate'' for the concentration of low-degree polynomials that we needed in the case of actual probability distributions above. Concretely, we use this non-trivial eigenvalue lower bound on $\pE^{\otimes k}$ as follows: let $h_i$ be the indicator polynomial of the ``bad event'' $\sum_c x_{i,c} \leq 1$. Then, we prove that for a polynomial $f$ to be able to ``detect'' this event, we must have $\pE^{\otimes \numcolors}[f^2 h_i] = \Omega(\pE^{\otimes \numcolors}[f^2])$. However, applying Cauchy-Schwarz, we have that $\pE^{\otimes \numcolors}[f^2 h_i] \leq \sqrt{\pE^{\otimes \numcolors}[f^4] \pE^{\otimes \numcolors}[h_i^2]}$. We show that the smallest eigenvalue condition implies a \emph{$2 \shortrightarrow 4$ hypercontractive inequality} on the pseudo-expectation operator on polynomials of total degree $\leq d$, i.e., $\pE^{\otimes \numcolors}[f^4]  \leq (\frac{n^{O(d)}}{\lambda})^d \pE^{\otimes \numcolors}[f^2]^2$. Combined with the estimate (that one roughly expects to hold from the independent sampling based argument) $\pE^{\otimes \numcolors}[h_i] \approx e^{-\numcolors/\numlb}$, this yields that $\pE^{\otimes \numcolors}$ indeed satisfies the constraints $\sum_{c} x_{i,c} \geq 1$ for every $i$, when $\numcolors = O(\numlb d \log (n^d/\lambda))$.

%%%%%%%%%%%%%%%%%%%%%%%%%%%%%%%%%%%%%%%%%%%%%%%%%%%%%%%%%%%%%%%%%%%%%%%%%%%%%%%%
%%%%%%%%%%%%%%%%%%%%%%%%%%%%%%%%%%%%%%%%%%%%%%%%%%%%%%%%%%%%%%%%%%%%%%%%%%%%%%%%
\subsection{Weak vs.\ strong formulation for coloring} \label{sec:weak-vs-strong}
The coloring axioms are often stated with an equality (we call this the \emph{strong form}) in the sum constraints along with the additional constraints $\{x_{i,c} x_{i,c'} = 0 : c \ne c'\}$, instead of an inequality (the \emph{weak form}) as done in \cref{colorconstraints}. Namely, the strong coloring constraints are the following.
\begin{center}
\begin{subequations}
\boxed{
\begin{aligned}
\text{(Strong) Color Constraints} \\
x_{i,c}^2 = x_{i,c} &\text{ for all } i \in [n], c \in [k] & \ \text{(Booleanity Constraints)} \\
x_{i,c} x_{j,c} = 0 &\text{ for all } c\in [k] \text{ and } \{i,j\} \in E(G) & \ \text{(Edge Constraints)}  \\
\sum_{c} x_{i,c} = 1 &\text{ for all } i \in [n] & \ \text{(Sum Equality Constraints)} \\
x_{i,c} x_{i,c'} = 0 &\text{ for all } c \ne c' \in [k] & \ \text{(Same Color Constraints)}
\end{aligned}
}
\end{subequations}
\end{center}
When viewed as a polynomial optimization problem, there is no difference between the weak and strong formulations: one is satisfiable if and only if the other is. Further, SoS relaxations of both formulations ``converge'' (i.e., refute $k$-coloring in $G(n,1/2)$ for the right value of $k$) at $O(\log n)$ degree, and both imply corresponding lower bounds for independent set: a degree $d$ coloring (weak or strong) pseudo-expectation with $\numcolors$ colors can easily be transformed into a degree $d$ independent set pseudo-expectation with independent set size $\geq \frac{n}{\numcolors}$. Thus, while the SoS relaxation of the strong form is \emph{formally} stronger (for degrees $>2$), \cref{colorconstraints} do not appear to meaningfully weaken the strong formulation. 

However, at the moment our technique does not succeed in constructing a pseudo-expectation that satisfies the constraints in the strong formulation. This is an important technical issue encountered in proving several prior SoS lower bounds where it turns out to be unwieldy to handle ``hard'' constraints such as those formulated by an exact polynomial equality. For example, in the planted clique problem, one may naturally wish for the pseudo-expectation to satisfy the clique-size constraint ``$\sum_i x_i = \omega$'' \emph{exactly}. While this is achieved for the degree $4$ pseudo-expectation of~\cite{DBLP:journals/corr/HopkinsKP15,DBLP:journals/corr/RaghavendraS15}, the degree $\sim \log n$ pseudo-expectation constructed in~\cite{DBLP:conf/focs/BarakHKKMP16} does \emph{not} satisfy this as a constraint. This technical deficiency can sometimes even be crucial in downstream applications. For example, the construction of the hardness result for finding Nash equilibria in two player games in~\cite{DBLP:conf/stoc/KothariM18} (see also the discussion in~\cite{DBLP:journals/corr/abs-1809-01207}) needs elaborate work-arounds in order to work without satisfying such exact constraints. 

Informally speaking, this is because the proofs of positivity of candidate pseudo-expectations rely on ``collecting terms'' together in the \emph{graphical matrix} (a class of structured random matrices) decomposition in order to form PSD matrices. This aggregation step needs coefficients on various graphical matrices appearing in the decomposition to satisfy certain exact relationships. Modifying such coefficients to satisfy hard constraints while maintaining positivity appears challenging. 

Such technical difficulty has been dealt with in some special cases (where the analyses did not need pseudo-calibration in the first place). For example, for the (much simpler) case of constraint satisfaction problems with a single global equality constraint, this problem was addressed via certain \emph{ad hoc} methods in a recent work~\cite{DBLP:journals/corr/abs-1809-01207}. That work also includes a longer discussion on the issues arising in constructing pseudo-expectations satisfying hard constraints. Finding general techniques to design and analyze pseudo-expectations that exactly satisfy multiple hard constraints \emph{simultaneously} -- such as those arising in the strong formulation of graph coloring -- is an important and challenging open problem.

%!TEX root = ../main.tex

%%%%%%%%%%%%%%%%%%%%%%%%%%%%%%%%%%%%%%%%%%%%%%%%%%%%%%%%%%%%%%%%%%%%%%%%%%%%%%%%
%%%%%%%%%%%%%%%%%%%%%%%%%%%%%%%%%%%%%%%%%%%%%%%%%%%%%%%%%%%%%%%%%%%%%%%%%%%%%%%%
%%%%%%%%%%%%%%%%%%%%%%%%%%%%%%%%%%%%%%%%%%%%%%%%%%%%%%%%%%%%%%%%%%%%%%%%%%%%%%%%
\section{Reduction to SoS Lower Bounds for Independent Set}
\label{sec:mainanalysis}

In this section, we prove \cref{thm:reduction} (restated below).

\begin{theorem*}[\Cref{thm:reduction}, restated]
Let $G$ be a graph on $n$ vertices, and let $\pE$ be a degree $d$ independent set pseudo-expectation. Suppose further that $\pE$ satisfies the two ``covering'' properties: \begin{inparaenum}[(1)] \item $\pE[x_i] \geq \frac{1}{\numlb}$ for some integer $\numlb$, and \item there exists $\lambda \in \R_{> 0}$ such that for all multilinear $f$ with $\deg(f) \leq d/2$, $\pE[f^2] \geq \lambda \norm{\Pi_G f}_2^2$, where $\Pi_G$ is the projection of $f$ onto the linear subspace orthogonal to $\{g x_i x_j : \{i,j\} \in E(G), \deg(g) \leq d-2\}$ (viewed as a subset of coefficient vectors of polynomials with the Euclidean inner product), and $\norm{f}_2$ denotes the $\ell_2$ norm of the polynomial of $f$, viewed as a coefficient vector\end{inparaenum}. 
Then, there is a degree $d' := 1 + d/2$ coloring pseudo-expectation $\pE'$ using $\numcolors = O(\numlb d \log (n^d/\lambda))$ colors.
\end{theorem*}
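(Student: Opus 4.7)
I will take $\pE' := \pE^{\otimes k}$, the $k$-fold tensor power where each color $c \in [k]$ carries its own independent copy of the variables: $\pE'[\prod_{(i,c)\in T} x_{i,c}] := \prod_{c=1}^k \pE[\prod_{i \in T_c} x_i]$ with $T_c := \{i : (i,c)\in T\}$, extended linearly. Viewed as an operator on polynomials of per-color degree $\leq d$ (and hence total degree $\leq kd$), the normalization, booleanity, and edge axioms for $\pE'$ transfer color-by-color from the corresponding axioms for $\pE$, and positivity on polynomials of degree $\leq kd/2$ follows because a tensor product of PSD kernels is PSD. This gives axioms (1), (2), (3a), and (3b) of a degree-$d'$ coloring pseudo-expectation immediately, so the entire content of the proof is the sum constraint (3c): show that $\pE'[g^2(\sum_c x_{i,c} - 1)] \geq 0$ for all $g$ of degree $\leq (d'-1)/2$.

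To tackle the sum constraint, set $h_i := \prod_{c=1}^{L}(1 - x_{i,c})$ for a parameter $L = \Theta(k_0 d \log(n^d/\lambda))$ to be chosen; this is a low-degree surrogate for the indicator that vertex $i$ is uncolored. Since $\sum_c x_{i,c} + h_i \geq 1$ on the Boolean hypercube (by cases on whether any of the first $L$ colors are used at $i$), the identity
\begin{align*}
\pE'[g^2(\textstyle\sum_c x_{i,c} - 1)] \;=\; \pE'[g^2(\textstyle\sum_c x_{i,c} + h_i - 1)] \;-\; \pE'[g^2 h_i]
\end{align*}
splits the quantity into a ``good'' and a ``bad'' part, each of which I aim to control separately. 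For the good part, I plan to start with a low-degree sum-of-squares certificate of $\sum_c x_{i,c} + h_i - 1 \geq 0$ modulo the booleanity ideal (using the factorized form of $h_i$), and then combine it with a quantitative lower bound of order $\Omega((k-L)/k_0) \cdot \pE'[g^2]$ derived from the marginal hypothesis $\pE[x_i] \geq 1/k_0$ aggregated across the $k - L$ color factors past the first $L$. For the bad part, I apply pseudo-expectation Cauchy--Schwarz, $\pE'[g^2 h_i] \leq \sqrt{\pE'[g^4]\cdot\pE'[h_i^2]}$, and compute $\pE'[h_i^2]$ directly via the tensor structure and the booleanity identity $(1-x_i)^2 = 1 - x_i$:
\begin{align*}
\pE'[h_i^2] \;=\; \prod_{c=1}^{L} \pE[\,1 - x_i\,] \;\leq\; (1 - 1/k_0)^{L} \;\leq\; e^{-L/k_0}.
\end{align*}

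The key technical step is a $2\to 4$ hypercontractive inequality $\pE'[g^4] \leq (n^{O(d)}/\lambda)^{O(d)} \cdot \pE'[g^2]^2$ for $g$ of degree $\leq d$, which I will derive from covering property (2) via a restricted-subspace eigenvalue bound for $\pE^{\otimes k}$. The crucial observation is that any polynomial of total degree $\leq d$ nontrivially occupies at most $d$ of the $k$ color factors, so the minimum eigenvalue of $\pE^{\otimes k}$ restricted to this low-degree subspace decays only as an exponential in $d \log n$ rather than as $\lambda^k$; combined with the standard coefficient inequality $\|g^2\|_\infty^2 \leq n^{O(\deg g)} \|g\|_2^2$, this converts the eigenvalue bound into the hypercontractive inequality. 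Plugging back, $\pE'[g^2 h_i] \leq \sqrt{(n^{O(d)}/\lambda)^{O(d)}\cdot e^{-L/k_0}}\cdot \pE'[g^2]$, which is an arbitrarily small constant multiple of $\pE'[g^2]$ for the chosen $L$; for $k = O(k_0 d \log(n^d/\lambda))$ as in the theorem, the lower bound on the good part dominates the Cauchy--Schwarz bound on the bad part and the sum constraint follows.

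The two places I expect the main obstacle to lie are (i) establishing the restricted-subspace tensor-power eigenvalue bound, which is the technical heart of the paper and what replaces the spectral analysis of graphical matrices required in prior SoS lower bound constructions, and (ii) managing the degree bookkeeping so that the SoS certificate for $\sum_c x_{i,c} + h_i - 1 \geq 0$ (potentially of degree $O(L)$, much larger than $d'$), the Cauchy--Schwarz inequality, and the hypercontractive inequality all fit within the effective degree $kd$ of the tensor operator $\pE^{\otimes k}$.
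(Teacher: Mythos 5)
Your proposal is correct and follows essentially the same high-level plan as the paper: take $\pE' := \pE^{\otimes k}$ restricted to degree $\leq d' = 1 + d/2$, observe that booleanity, edge, normalization, and positivity transfer color-by-color, and then reduce the sum constraint to a Cauchy--Schwarz estimate combined with a $2 \to 4$ hypercontractive inequality derived from a restricted-subspace eigenvalue lower bound for $\pE^{\otimes k}$. You have correctly identified the technical heart (the eigenvalue bound on the coloring-degree-$\leq d$ subspace, which decays like $(\lambda n^{-O(d)})^{\deg f}$ rather than $\lambda^k$), though you have not carried it out.

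The one place you depart from the paper is the choice of indicator and the resulting decomposition, and here your variant is actually somewhat cleaner. The paper takes $h_i$ to be the exact indicator of the event ``$\sum_c x_{i,c} \leq 1$'' (a sum of $k+1$ coloring-degree-$1$ terms), and then must prove by a two-layer induction (\cref{claim:indicatorsos}) that $\pE^{\otimes k}$ satisfies $(1-h_i)(\sum_c x_{i,c} - 2) \geq 0$. You instead use $h_i = \prod_{c \leq L}(1-x_{i,c})$ for a parameter $L \asymp k/2$, and the ``good part'' then has an \emph{explicit} sum-of-squares decomposition modulo booleanity: writing $g_i^{(t)} := \prod_{c \leq t}(1-x_{i,c})$, the telescoping identity $1 - g_i^{(L)} = \sum_{t=1}^L x_{i,t}\,g_i^{(t-1)}$ gives
\begin{equation*}
\sum_{c} x_{i,c} + h_i - 1 \;=\; \sum_{c > L} x_{i,c} \;+\; \sum_{t=1}^{L} x_{i,t}\bigl(1 - g_i^{(t-1)}\bigr) \;\equiv\; \sum_{c > L} x_{i,c}^2 \;+\; \sum_{t=1}^{L}\bigl(x_{i,t}(1-g_i^{(t-1)})\bigr)^2
\end{equation*}
modulo the booleanity ideal, since $x_{i,t}$ and $1-g_i^{(t-1)}$ are both idempotent mod booleanity. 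This replaces the paper's nested induction with a one-line certificate, and the $\sum_{c>L}$ term immediately gives the quantitative lower bound $\Omega\bigl((k-L-\deg g)/k_0\bigr)\,\pE'[g^2]$ you asked for (by the tensor structure, for each $c>L$ not occupied by $g$ we have $\pE'[(g\,x_{i,c})^2] = \pE[x_i]\,\pE'[g^2] \geq \pE'[g^2]/k_0$). Your bound $\pE'[h_i^2] = (1-\pE[x_i])^L \leq e^{-L/k_0}$ is also slightly cleaner than the paper's $\pE^{\otimes k}[h_i] \leq k\,e^{-k/k_0}$.

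Your concern (ii) about the SoS certificate having total degree $O(L)$ is a non-issue once the right degree measure is used: each square in the decomposition above has \emph{coloring degree} at most $1$, so multiplying by $g^2$ stays within coloring degree $d/2$ (for $\cdeg g \leq d/4$), and $\pE^{\otimes k}$ is positive on that entire subspace. The paper introduces the notion of coloring degree precisely to handle this, and your argument is fully compatible with it. The remaining gap, as you note, is proving the restricted-subspace eigenvalue bound — that is where the real work of the paper (\cref{lem:eiglb}) resides, and your sketch of why the decay is only exponential in $d\log n$ (any degree-$d$ polynomial touches at most $d$ color modes) is the right intuition.
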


%%%%%%%%%%%%%%%%%%%%%%%%%%%%%%%%%%%%%%%%%%%%%%%%%%%%%%%%%%%%%%%%%%%%%%%%%%%%%%%%
%%%%%%%%%%%%%%%%%%%%%%%%%%%%%%%%%%%%%%%%%%%%%%%%%%%%%%%%%%%%%%%%%%%%%%%%%%%%%%%%
\subsection{\emph{Coloring degree} of polynomials}
Before proceeding with the proof, we first introduce some notation. Let $f$ be a polynomial in the variables $\{x_{i,c}\}_{i \in [n], c \in [\numcolors]}$. We define the \emph{coloring degree} of $f$, denoted by $\cdeg(f)$, to be the maximum, taken over all monomials $\prod_{c = 1}^{\numcolors} \prod_{i = 1}^n x_{i,c}^{\alpha_{i,c}}$ for which $f$ has a nonzero coefficient, of $\max_{c \in [\numcolors]} \deg(\prod_{i = 1}^n x_{i,c}^{\alpha_{i,c}})$. As an example, the polynomial $x_{1,1} x_{1,2}$ has degree $2$ and coloring degree $1$, while the polynomial $x_{1,1} x_{2,1}$ has degree $2$ and coloring degree $2$. Informally, the coloring degree only ``charges'' a polynomial for degrees in variables of a single color. 

Let $\cP_d$ denote the set of polynomials in the variables $\{x_i\}_{i \in [n]}$ of degree at most $d$. Then, the set of coloring degree $\leq d$ polynomials is precisely $\cP_d^{\otimes \numcolors}$. Recall that the operator $\Pi_G$ is the projection of $f \in \cP_d$ to the subspace orthogonal to $\{g x_i x_j : \{i,j\} \in E(G), \deg(g) \leq d-2\}$. We let $\Pi_G^{\otimes \numcolors}$ denote the $\numcolors$-th tensor of $\Pi_G$. Namely, $\Pi_G^{\otimes \numcolors}$ is the projection of $f \in \cP_d^{\otimes \numcolors}$ to the subspace orthogonal to $\{g x_{i,c} x_{j,c} : \{i,j\} \in E(G), c \in [\numcolors], \cdeg(g x_{i,c} x_{j,c}) \leq d\}$. 

Recall that a degree $d$ pseudo-expectation is a linear operator $\pE \colon \cP_d \to \R$ such that $\pE[1] = 1$, and $\pE[f^2] \geq 0$ for all $f$ with $\deg(f) \leq d/2$. For a pseudo-expectation $\pE \colon \cP_d \to \R$, we define $\pE^{\otimes \numcolors} \colon \cP_d^{\otimes \numcolors} \to \R$ to be the $\numcolors$-th tensor of $\pE$. Concretely, $\pE^{\otimes \numcolors}$ is a pseudo-expectation in the variables $\{x_{i,c}\}_{i \in [n], c \in [\numcolors]}$, defined as follows. For polynomials $f_1, \dots, f_{\numcolors}$ where \begin{inparaenum}[(1)] \item $f_c$ is a polynomial in the variables $\{x_{i,c}\}_{i \in [n]}$ for each $c$, and \item $\deg(f_c) \leq d$ for all $c$\end{inparaenum}, we first define $\pE^{\otimes \numcolors}[\prod_{c = 1}^{\numcolors} f_c] \defeq \prod_{c = 1}^{\numcolors} \pE[f_c]$, and then extend $\pE^{\otimes \numcolors}$ to be defined on all $f \in \cP_d^{\otimes \numcolors}$ via linearity. We define a \emph{coloring degree $d$ pseudo-expectation} to be a linear operator $\pE \colon \cP_d^{\otimes \numcolors} \to \R$ such that $\pE[1] = 1$ and $\pE[f^2] \geq 0$ for all $f$ with $\cdeg(f) \leq d/2$. If $\pE$ is a degree $d$ pseudo-expectation, then $\pE^{\otimes \numcolors}$ is a coloring degree $d$ pseudo-expectation.

It is well-known that degree $d$ pseudo-expectations satisfy the Cauchy-Schwarz inequality:
\begin{fact}[See~\cite{BarakS16}]
\label{fact:cauchyschwarz}
Let $f,g$ be polynomials with $\deg(f), \deg(g) \leq d/2$, and let $\pE$ be a degree $d$ pseudo-expectation. Then $\pE[f g] \leq \sqrt{\pE[f^2] \pE[g^2]}$.
\end{fact}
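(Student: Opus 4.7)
The plan is to prove this via the standard ``discriminant trick'' from the classical proof of the Cauchy--Schwarz inequality, adapted to the pseudo-expectation setting. The only property of $\pE$ we will need beyond linearity is the positivity axiom: $\pE[h^2] \geq 0$ whenever $\deg(h) \leq d/2$.

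First, I would introduce the one-parameter family of polynomials $h_t := t f + g$ for $t \in \R$. Since both $f$ and $g$ have degree at most $d/2$, so does $h_t$, and thus $\deg(h_t^2) \leq d$, which means $\pE[h_t^2]$ is defined and, by positivity, non-negative. Expanding and using linearity of $\pE$ gives
\[
q(t) := \pE[h_t^2] = t^2 \, \pE[f^2] + 2t \, \pE[fg] + \pE[g^2] \geq 0 \quad \text{for all } t \in \R.
\]
The main step is then to split on the sign of $\pE[f^2]$ and extract the desired inequality from the fact that $q(t) \geq 0$ everywhere.

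If $\pE[f^2] > 0$, then $q$ is a genuine upward-opening quadratic in $t$, and the condition $q(t) \geq 0$ for all $t$ is equivalent to its discriminant being non-positive, i.e., $4\,\pE[fg]^2 - 4\,\pE[f^2]\,\pE[g^2] \leq 0$, which rearranges to $\pE[fg] \leq |\pE[fg]| \leq \sqrt{\pE[f^2]\,\pE[g^2]}$. If instead $\pE[f^2] = 0$, then $q(t) = 2t\,\pE[fg] + \pE[g^2]$ is affine in $t$; for an affine function to be non-negative on all of $\R$, its slope must vanish, forcing $\pE[fg] = 0$. In that case the claimed bound $\pE[fg] \leq \sqrt{\pE[f^2]\,\pE[g^2]} = 0$ holds trivially (and by symmetry the case $\pE[g^2] = 0$ is handled identically, applying the same argument to $\pE[(f + s g)^2]$).

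There is no substantive obstacle here; the only thing to be careful about is verifying that the degree bookkeeping is legal so that all invocations of positivity are valid. Concretely, we must use the hypothesis $\deg(f), \deg(g) \leq d/2$ (rather than, say, $\deg(f) + \deg(g) \leq d$) in order to ensure that $\deg(h_t) \leq d/2$ and hence $\pE[h_t^2] \geq 0$ is a legitimate consequence of the degree-$d$ positivity axiom. Beyond that, the argument is entirely elementary, so the write-up can be presented in a few lines.
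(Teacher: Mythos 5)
Your proof is correct, and it is essentially the standard argument this fact refers to: the paper itself gives no proof (it cites the lecture notes of Barak--Steurer), and its only hint — the remark after \cref{fact:cauchyschwarzcdeg} that the proof needs only that $\pE[f^2]$, $\pE[g^2]$, $\pE[fg]$ and $\pE[(f-g)^2]$ be well-defined — points to the same ``expand a square and invoke positivity'' argument, of which your discriminant version (with the correct degree bookkeeping $\deg(tf+g)\leq d/2$ and the separate treatment of $\pE[f^2]=0$) is a routine variant. Nothing further is needed.
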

We observe that a similar fact also holds for coloring degree $d$ pseudo-expectations.
\begin{fact}
\label{fact:cauchyschwarzcdeg}
Let $f,g$ be polynomials with $\cdeg(f), \cdeg(g) \leq d/2$, and let $\pE$ be a coloring degree $d$ pseudo-expectation. Then $\pE[f g] \leq \sqrt{\pE[f^2] \pE[g^2]}$.
\end{fact}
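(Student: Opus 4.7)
My plan is to adapt the standard textbook proof of the pseudo-expectation Cauchy--Schwarz inequality (\cref{fact:cauchyschwarz}) to the coloring-degree setting. The only thing one needs to verify is that the coloring degree behaves sub-additively under products in exactly the way ordinary degree does, so that the same ``non-negative quadratic in $\lambda$'' argument still fits within the positivity promise of $\pE$.

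First I will record the elementary observation that for any polynomials $p, q$ in the variables $\{x_{i,c}\}$ one has $\cdeg(p+q) \leq \max(\cdeg(p), \cdeg(q))$ and $\cdeg(pq) \leq \cdeg(p)+\cdeg(q)$. Both follow directly from the definition: a monomial appearing in $pq$ is a product of a monomial from $p$ and one from $q$, so its degree in the variables of any single color $c$ is at most the sum of the corresponding degrees in the two factors. The key consequence is that if $\cdeg(h) \leq d/2$ then $\cdeg(h^2) \leq d$, so $\pE[h^2] \geq 0$ by the definition of a coloring degree $d$ pseudo-expectation.

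Applying this to $h = f + \lambda g$ for an arbitrary real parameter $\lambda$, the hypothesis $\cdeg(f), \cdeg(g) \leq d/2$ gives $\cdeg(h) \leq d/2$, hence
\begin{equation*}
0 \,\leq\, \pE\bigl[(f+\lambda g)^2\bigr] \,=\, \pE[f^2] + 2\lambda\,\pE[fg] + \lambda^2\,\pE[g^2]
\end{equation*}
for every $\lambda \in \R$. The statement now follows by the standard non-negative-quadratic discriminant argument: if $\pE[g^2] > 0$ we minimize over $\lambda$ (or equivalently note the discriminant must be non-positive) to obtain $\pE[fg]^2 \leq \pE[f^2]\,\pE[g^2]$, and in particular $\pE[fg] \leq \sqrt{\pE[f^2]\,\pE[g^2]}$.

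The only mildly delicate case is the degenerate one $\pE[g^2] = 0$: here the inequality above becomes linear in $\lambda$, and non-negativity for all $\lambda$ forces $\pE[fg] = 0$, which is consistent with the claimed bound $\sqrt{\pE[f^2]\cdot 0} = 0$. There is no real obstacle in this proof: once the product rule $\cdeg(h^2)\leq 2\cdeg(h)$ is in hand, the argument is a verbatim transcription of the usual pseudo-expectation Cauchy--Schwarz. The point of stating the fact separately is precisely to confirm that the coloring-degree definition is strong enough to accommodate the same square-completion trick despite being a weaker constraint than requiring $\pE[h^2] \geq 0$ for all $h$ of ordinary degree at most $d/2$.
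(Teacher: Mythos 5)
Your proposal is correct and takes essentially the same route as the paper: the paper's proof is a one-line remark that the textbook Cauchy--Schwarz argument for pseudo-expectations goes through once one checks that $\pE[f^2]$, $\pE[g^2]$, $\pE[fg]$, and $\pE[(f \pm g)^2]$ are all within the domain of the coloring-degree-$d$ operator, and this is precisely the verification you carry out via the sub-additivity of $\cdeg$ under products. One tiny polish: the positivity $\pE[h^2]\ge 0$ for $h=f+\lambda g$ follows directly from the definition because $\cdeg(h)\le d/2$; the observation $\cdeg(h^2)\le d$ is what guarantees $\pE[h^2]$ is \emph{well-defined} (i.e.\ $h^2\in\cP_d^{\otimes k}$), and it is cleaner to separate the two roles. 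Your handling of the degenerate case $\pE[g^2]=0$ is fine.
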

The proof of \cref{fact:cauchyschwarzcdeg} is nearly identical to the proof of \cref{fact:cauchyschwarz}, as the proof of \cref{fact:cauchyschwarz} merely requires that $\pE$ is a pseudo-expectation where $\pE[f^2]$, $\pE[g^2]$, $\pE[fg]$ and $\pE[(f-g)^2]$ are all well-defined.

%%%%%%%%%%%%%%%%%%%%%%%%%%%%%%%%%%%%%%%%%%%%%%%%%%%%%%%%%%%%%%%%%%%%%%%%%%%%%%%%
%%%%%%%%%%%%%%%%%%%%%%%%%%%%%%%%%%%%%%%%%%%%%%%%%%%%%%%%%%%%%%%%%%%%%%%%%%%%%%%%
\subsection{Proof of \cref{thm:reduction}}
\parhead{Construction of the pseudo-expectation}
Fix a graph $G$ and degree bound $d$, and let $\pE \colon \cP_d \to \R$ be a degree $d$ independent set pseudo-expectation for $G$ such that \begin{inparaenum}[(1)] \item $\pE[x_i] \geq \frac{1}{\numlb}$, and \item $\pE[f^2] \geq \lambda \norm{\Pi_G f}_2^2$\end{inparaenum}. Let $\numcolors \in \N$ to be chosen later, and assume without loss of generality that $\numcolors < n$.  

Let $\pE^{\otimes \numcolors} \colon \cP_d^{\otimes \numcolors} \to \R$ be the $\numcolors$-fold tensor power of $\pE$, and let $\pE'$ be the pseudo-expectation defined over all polynomials $f$ that have \emph{degree} at most $1 + d/2$ obtained by restricting $\pE^{\otimes \numcolors}$ to this subspace. 

\parhead{Analysis of the constraints} 
We first observe that both $\pE'$ and $\pE^{\otimes \numcolors}$ trivially satisfy the booleanity constraints, edge constraints, and positivity constraint (over their respective domains), since $\pE$ satisfies these constraints. We verify these simple facts in \cref{appendix:edgepsd}. As a consequence, if $f$ has $\Pi_G^{\otimes \numcolors} f = 0$, then $\pE^{\otimes \numcolors}$ satisfies $f = 0$ as a constraint; namely, for any $g$ with $\cdeg(f g) \leq d$, it holds that $\pE^{\otimes \numcolors}[f g] = 0$.

It remains to show that $\pE'$ satisfies the sum constraints, i.e., for all $f$ with $\deg(f) \leq \frac{d}{4}$ and for every $i$, $\pE'[f^2 (\sum_c x_{i,c} - 1)] \geq 0$. Fix $i$, and let $h_i := \sum_{c} x_{i,c} \prod_{c' \ne c} (1 - x_{i,c'}) + \prod_{c} (1 - x_{i,c})$. Note that $h_i$ is the indicator of the event that $\sum_{c = 1}^\numcolors x_{i,c} \leq 1$, and when written as a polynomial, has \emph{coloring} degree $1$.

We will rely on the following two technical lemmas in our proof. The first informally shows that $\pE^{\otimes \numcolors}$ ``thinks'' that $\sum_c x_{i,c} \geq 2$ when the event indicated by $h_i$, namely ``$\sum_c x_{i,c} \leq 1$'', does not occur. Intuitively, this should clearly hold.
\begin{lemma}
\label{claim:indicatorsos}
$\pE^{\otimes \numcolors}$ satisfies the constraint $(1 - h_i) (\sum_{c} x_{i,c} - 2) \geq 0$. Namely, for every polynomial $f$ with $\cdeg(f) \leq \frac{d-2}{2}$, it holds that $\pE^{\otimes \numcolors}[f^2 (1 - h_i) (\sum_c x_{i,c} - 2)] \geq 0$.
\end{lemma}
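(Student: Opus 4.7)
The plan is to exhibit a sum-of-squares certificate for $(1-h_i)(\sum_c x_{i,c} - 2)$ modulo the booleanity ideal, and then invoke positivity of $\pE^{\otimes \numcolors}$ together with the fact (verified earlier in the paper) that $\pE^{\otimes \numcolors}$ satisfies the booleanity constraints. On the Boolean hypercube, the polynomial $(1-h_i)(\sum_c x_{i,c} - 2)$ vanishes unless at least two of $x_{i,1}, \dots, x_{i,\numcolors}$ equal $1$, in which case it equals $|\{c : x_{i,c} = 1\}| - 2 \geq 0$, so one expects a clean SoS certificate with nonnegative coefficients that is local to the variables $x_{i,1},\dots,x_{i,\numcolors}$.

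For each $T \subseteq [\numcolors]$, let $p_T := \prod_{c \in T} x_{i,c} \prod_{c' \notin T}(1 - x_{i,c'})$; note $\cdeg(p_T) = 1$. Expanding the polynomial identity $\prod_c\big(x_{i,c} + (1 - x_{i,c})\big) = 1$ yields $\sum_T p_T = 1$, hence $1 - h_i = \sum_{|T| \geq 2} p_T$ holds as polynomials. A short booleanity computation shows $x_{i,c}\, p_T \equiv p_T$ when $c \in T$ and $x_{i,c}\, p_T \equiv 0$ when $c \notin T$, so $(\sum_c x_{i,c})\, p_T \equiv |T|\, p_T$, and therefore
\[ (1 - h_i)\Big(\sum_c x_{i,c} - 2\Big) \;\equiv\; \sum_{|T| \geq 2} (|T| - 2)\, p_T \pmod{\text{booleanity}}. \]
Since each $p_T$ is multilinear and $\{0,1\}$-valued on the hypercube, $p_T^2 \equiv p_T$ modulo booleanity, upgrading the right-hand side to a manifestly nonnegative combination of squares:
\[ (1 - h_i)\Big(\sum_c x_{i,c} - 2\Big) \;\equiv\; \sum_{|T| \geq 2} (|T| - 2)\, p_T^2 \pmod{\text{booleanity}}. \]

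Multiplying by $f^2$ and applying $\pE^{\otimes \numcolors}$ (which annihilates booleanity-ideal terms at the relevant coloring degree) then gives
\[ \pE^{\otimes \numcolors}\Big[f^2 (1 - h_i)\Big(\sum_c x_{i,c} - 2\Big)\Big] \;=\; \sum_{|T| \geq 2} (|T| - 2)\, \pE^{\otimes \numcolors}\big[(f p_T)^2\big] \;\geq\; 0, \]
using positivity together with $\cdeg(f p_T) \leq \cdeg(f) + \cdeg(p_T) \leq (d-2)/2 + 1 = d/2$. I expect the only technical concern to be coloring-degree bookkeeping: the full product $f^2(1-h_i)(\sum_c x_{i,c} - 2)$ has coloring degree at most $(d-2) + 1 + 1 = d$, each square $(f p_T)^2$ has coloring degree at most $d$, and the booleanity correction can be written as $\sum_\ell g_\ell(x_{i, c_\ell}^2 - x_{i, c_\ell})$ with $\cdeg(g_\ell) \leq d - 2$ by the standard multilinearization reduction. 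Beyond this accounting, the identity is a purely arithmetic fact about the $\numcolors$ Boolean variables $x_{i,1}, \dots, x_{i,\numcolors}$ and is insensitive to the graph $G$ or to any finer structure of $\pE$.
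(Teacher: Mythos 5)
Your proof is correct, and it takes a genuinely different (and cleaner) route than the paper. The paper proves the claim by a somewhat delicate double induction: it defines truncated indicators $h_i^{(t)}$ and partial products $g_i^{(t)}$ restricted to the first $t$ colors, shows a base case at $t=2$, and then runs two nested inductions on $t$ (one for the main claim, one for the auxiliary inequality $\pE^{\otimes\numcolors}[f^2(1-g_i^{(t)})(\sum_{c\le t}x_{i,c}-1)]\ge 0$), with careful per-step tracking of coloring degree so that each squared prefactor $(1-x_{i,t+1})^2$ or $x_{i,t+1}^2$ can be absorbed into the $f^2$ of the induction hypothesis. You instead exhibit a direct SoS certificate in the indicator basis $\{p_T\}_{T\subseteq[\numcolors]}$: the identity $\sum_T p_T = 1$ gives $1-h_i = \sum_{|T|\ge 2}p_T$, the eigenrelation $x_{i,c}p_T\equiv \Ind[c\in T]\,p_T$ gives $(\sum_c x_{i,c})p_T\equiv |T|p_T$, and idempotence $p_T^2\equiv p_T$ turns the whole thing into the manifestly nonnegative combination $\sum_{|T|\ge 2}(|T|-2)(fp_T)^2$ modulo booleanity. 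Your route buys transparency --- the certificate and the underlying combinatorics are visible in one line --- at the cost of an exponentially long certificate (one term per $T$), which is irrelevant here since we only need existence, not an efficiently verifiable proof. The paper's induction buys nothing extra for this lemma but avoids writing down the $2^{\numcolors}$-term sum. Your degree bookkeeping is also right: $\cdeg(fp_T)\le (d-2)/2+1=d/2$, so positivity of $\pE^{\otimes\numcolors}$ applies, and both sides of the congruence have coloring degree at most $d$, so the booleanity reduction is licensed by the paper's Claim A.1 (which reduces any monomial of coloring degree $\le d$ to its multilinearization via the tensor structure of $\pE^{\otimes\numcolors}$).
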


The second lemma shows that the linear operator $\pE$ satisfies a hypercontractive inequality -- that is, the expectations of 4th powers of low-degree polynomials can be upper-bounded in terms of the expectations of their 2nd powers. Readers familiar with Fourier analysis over the hypercube may observe that the ``scaling'' in our estimate grows as $\exp (O(d \log n))$ in contrast to the $\exp (O(d))$ scaling in the usual hypercontractive inequality over the uniform measure on the Boolean hypercube. However, this worse bound will be sufficient for our purposes.% may notice that the ``scaling'' we have is significantly worse than the one for 

\begin{lemma}[Hypercontractivity]
\label{claim:hypercontractivity}
For any multilinear $f$ with $\cdeg(f) \leq d/4$ satisfying $f = \Pi_G^{\otimes \numcolors} f$, we have $\pE^{\otimes \numcolors}[f^4] \leq n^{O(\deg(f))} \cdot \pE^{\otimes \numcolors}[f^2]^2/(\lambda n^{-O(d)})^{2 \deg(f)}$.
\end{lemma}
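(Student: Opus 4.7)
The plan is to upper bound $\pE^{\otimes \numcolors}[f^4]$ and lower bound $\pE^{\otimes \numcolors}[f^2]$ separately and combine. For the upper bound, I will use (a) an operator-norm argument on the pseudo-moment matrix of $\pE^{\otimes \numcolors}$ applied to the coefficient vector of the multilinearization of $f^2$, and (b) a purely combinatorial bound on $\norm{f^2}_2^2$ in terms of $\norm{f}_2^4$. For the lower bound, I will invoke the eigenvalue lower bound for $\pE^{\otimes \numcolors}$ restricted to polynomials with few active colors, which is developed in the separate argument referenced in \cref{sec:eiglb}.

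Concretely, let $g$ denote the multilinearization of $f^2$, so that $\cdeg(g) \leq 2\cdeg(f) \leq d/2$ and every monomial in $g$ involves at most $2\deg(f)$ colors. Writing $\pE^{\otimes \numcolors}[f^4] = \pE^{\otimes \numcolors}[g^2] = \iprod{g, M g}$, where $M$ denotes the pseudo-moment matrix of $\pE^{\otimes \numcolors}$ restricted to the subspace $V$ of multilinear polynomials with $\cdeg \leq d/2$ and at most $2\deg(f)$ active colors, we have $\dim V \leq \binom{\numcolors}{2\deg(f)} n^{O(d\deg(f))} \leq n^{O(d\deg(f))}$, and every entry of $M$ has absolute value at most $\pE^{\otimes \numcolors}[1]=1$ by Booleanity. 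Hence $\norm{M|_V}_{\mathrm{op}} \leq n^{O(d\deg(f))}$, giving $\pE^{\otimes \numcolors}[f^4] \leq n^{O(d\deg(f))}\norm{g}_2^2$. Expanding $f = \sum_\alpha c_\alpha x^\alpha$, the coefficients of $g$ are $g_\gamma = \sum_{\alpha \cup \beta = \gamma} c_\alpha c_\beta$; Cauchy--Schwarz on the inner sum, together with the bound of $3^{2\deg(f)}$ on the number of $(\alpha,\beta)$ pairs producing a given $\gamma$, yields $\norm{g}_2^2 \leq n^{O(\deg(f))}\norm{f}_2^4$. Invoking the eigenvalue lower bound $\pE^{\otimes \numcolors}[f^2] \geq (\lambda n^{-O(d)})^{\deg(f)}\norm{f}_2^2$ from \cref{sec:eiglb} and dividing yields $\pE^{\otimes \numcolors}[f^4]/\pE^{\otimes \numcolors}[f^2]^2 \leq n^{O(d\deg(f))}/(\lambda n^{-O(d)})^{2\deg(f)}$, which agrees, up to constants absorbed into the big-$O$, with the claimed bound $n^{O(\deg(f))}/(\lambda n^{-O(d)})^{2\deg(f)}$ (the case $\deg(f)=0$ is immediate since $f$ is then a scalar and both sides equal $f^4$).

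The hard part will be the eigenvalue lower bound invoked above. The naive tensorization $M^{\otimes \numcolors} \succeq \lambda^{\numcolors}\Pi_G^{\otimes \numcolors}$ only gives $\pE^{\otimes \numcolors}[f^2] \geq \lambda^{\numcolors}\norm{f}_2^2$, which is hopelessly weak since $\numcolors \gg \deg(f)$. Improving the exponent from $\numcolors$ down to $\deg(f)$ requires exploiting that every monomial in $f$ involves at most $\deg(f)$ of the $\numcolors$ colors, so that the $\numcolors - \deg(f)$ inactive colors contribute unit factors $\pE[1]=1$ in the tensor expansion rather than factors of $\lambda$. The main technical difficulty is handling cross terms between monomials with different active-color supports while simultaneously respecting the $\Pi_G$-projection on each active color; this is where the $n^{-O(d)}$ correction per active color arises and is the content of the dedicated eigenvalue analysis in \cref{sec:eiglb}.
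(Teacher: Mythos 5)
Your proposal is correct and follows essentially the same three-step decomposition as the paper's proof: (i) bound $\pE^{\otimes\numcolors}[f^4]=\pE^{\otimes\numcolors}[g^2]$ by the dimension of the relevant subspace times $\norm{g}_2^2$ (your operator-norm phrasing and the paper's $(\sum_m \abs{g_m})^2\le D\norm{g}_2^2$ argument are the same estimate), (ii) bound $\norm{g}_2^2$ by $n^{O(\deg f)}\norm{f}_2^4$ via a combinatorial count of factorizations, and (iii) invoke the eigenvalue lemma from \cref{sec:eiglb} to lower bound $\pE^{\otimes\numcolors}[f^2]$ by $(\lambda n^{-O(d)})^{\deg f}\norm{f}_2^2$. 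The only cosmetic difference is that your subspace $V$ (all cdeg-$\leq d/2$ polynomials with $\leq 2\deg f$ active colors) is looser than the paper's (monomials of \emph{total} degree $\leq 2\deg f$), giving $n^{O(d\deg f)}$ rather than $n^{O(\deg f)}$ in step (i), but as you correctly observe this slack is absorbed into the $(\lambda n^{-O(d)})^{2\deg f}$ denominator.
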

We postpone the proofs of \cref{claim:indicatorsos,claim:hypercontractivity} to \cref{sec:indicatorsos,sec:hypercontractivity}, respectively, and finish the proof assuming these two claims. Let $f$ be any polynomial with $\cdeg(f) \leq \frac{d}{4}$. We lower bound $\pE^{\otimes \numcolors}[f^2 \sum_{c} x_{i,c}]$. Without loss of generality, we can assume that $f$ is multilinear, as if $f$ is not multilinear, then we can reduce it modulo the booleanity constraints. We can also assume that $f = \Pi_G^{\otimes \numcolors} f$, as if this does not hold then we write $f = f_1 + f_2$ where $f_1 = \Pi_G^{\otimes \numcolors} f_1$ and $\Pi_G^{\otimes \numcolors} f_2 = 0$, and then we observe that $\pE^{\otimes \numcolors}[(f_1 + f_2)^2 \sum_c x_{i,c}] = \pE^{\otimes \numcolors}[f_1^2 \sum_c x_{i,c}]$ (because $f_2 = 0$ is satisfied by $\pE^{\otimes \numcolors}$ as a constraint) and $\cdeg(f_1) \leq \cdeg(f) \leq \frac{d}{4}$, as the projection operation can only decrease coloring degree. We note that $\pE^{\otimes \numcolors}[h_i^2] = \pE^{\otimes \numcolors}[h_i]$, since $h_i^2 \equiv h_i$ modulo the booleanity constraints $\{x_{i,c}^2 = x_{i,c}\}$. We have that
\begin{flalign*}
&\pE^{\otimes \numcolors}[f^2 \sum_{c} x_{i,c}] = \pE^{\otimes \numcolors}[f^2 h_i \sum_c x_{i,c}] + \pE^{\otimes \numcolors}[f^2 (1 - h_i) \sum_{c} x_{i,c}] \\
&= \pE^{\otimes \numcolors}[f^2 h_i^2 \sum_c x_{i,c}^2] + \pE^{\otimes \numcolors}[f^2 (1 - h_i) \sum_{c} x_{i,c}] \ \text{(as $h_i^2 \equiv h_i$ and $x_{i,c}^2 \equiv x_{i,c}$)}  \\
& \geq  0 + \pE^{\otimes \numcolors}[f^2 (1 - h_i) \sum_{c} x_{i,c}] \ \text{(by positivity of $\pE^{\otimes \numcolors}$)} \\
&\geq \pE^{\otimes \numcolors}[f^2 \cdot 2 (1 - h_i)] \ \text{(by \cref{claim:indicatorsos})} \\
&= 2 (\pE^{\otimes \numcolors}[f^2] - \pE^{\otimes \numcolors}[f^2 h_i]) \enspace.
\end{flalign*}
Note that this is well-defined because $\pE^{\otimes \numcolors}$ is defined on each of the terms in the above inequalities since $\cdeg(f) \leq d/4 \leq (d-4)/2$ and $\cdeg(h_i) = \cdeg(\sum_c x_{i,c}) = 1$.

Now, we observe that
\begin{flalign*}
&\pE^{\otimes \numcolors}[f^2 h_i] \leq \sqrt{\pE^{\otimes \numcolors} [f^4]} \sqrt{ \pE^{\otimes \numcolors} [h_i^2]} \ \text{(by \cref{fact:cauchyschwarzcdeg})} \\
&\leq \left(\frac{n^{O(d)}}{\lambda}\right)^{2\deg(f)} \pE^{\otimes \numcolors}[f^2] \cdot \sqrt{ \pE^{\otimes \numcolors} [h_i^2]} \ \text{(by \cref{claim:hypercontractivity})}\\
&\leq \left(\frac{n^{O(d)}}{\lambda}\right)^{2\deg(f)} \pE^{\otimes \numcolors}[f^2] \cdot \sqrt{ \pE^{\otimes \numcolors} [h_i]} \ \text{(since $h_i^2 \equiv h_i$)} \enspace.
\end{flalign*}
Next, we observe that $\pE^{\otimes \numcolors}[h_i] = \pE^{\otimes \numcolors}[\prod_{c} (1 - x_{i,c})] + \sum_{c} \pE^{\otimes \numcolors}[x_{i,c} \prod_{c' \ne c}(1 - x_{i,c'})] = (1 - \pE[x_i])^\numcolors + (\numcolors - 1) (\pE[x_i] (1 - \pE[x_i])^{\numcolors-1}) \leq \numcolors \cdot e^{-\numcolors/\numlb}$ using the tensor  structure of $\pE^{\otimes \numcolors}$ and that $\pE[x_i] \geq \frac{1}{\numlb}$. Hence, 
\begin{flalign*}
&\pE^{\otimes \numcolors}[f^2 h_i] \leq \left(\frac{n^{O(d)}}{\lambda}\right)^{2\deg(f)} \pE^{\otimes \numcolors}[f^2] \cdot \numcolors \cdot e^{-\numcolors/\numlb} \\
&\implies \pE^{\otimes \numcolors}[f^2 \sum_{c} x_{i,c}] \geq 2 \pE^{\otimes \numcolors}[f^2]\left(1 - \numcolors \cdot e^{-\numcolors/\numlb}\left(\frac{n^{O(d)}}{\lambda}\right)^{2 \deg(f)}\right) \enspace.
\end{flalign*}

Now, suppose that $f$ is any polynomial with $\deg(f) \leq \frac{d}{4}$. This implies that \begin{inparaenum}[(1)] \item $\pE'[f^2 \sum_{c} x_{i,c}]$ is defined, and \item$\cdeg(f) \leq \frac{d}{4}$ \end{inparaenum}, and so we have that (using $\lambda \leq 1$)
\begin{flalign*}
\pE'[f^2 \sum_c x_{i,c}] = \pE^{\otimes \numcolors}[f^2 \sum_c x_{i,c}] \geq 2 \pE^{\otimes \numcolors}[f^2]\left(1 - \numcolors \cdot e^{-\numcolors/\numlb}\left(\frac{n^{O(d)}}{\lambda}\right)^{d/2}\right) = \pE'[f^2] \cdot 2\left(1 - \numcolors \cdot e^{-\numcolors/\numlb}\left(\frac{n^{O(d)}}{\lambda}\right)^{d/2}\right)\enspace.
\end{flalign*}
Choosing $\numcolors = O(\numlb d \log( n^d/\lambda))$, it follows that $1 - \numcolors \cdot e^{-\numcolors/\numlb}\left(\frac{n^{O(d)}}{\lambda}\right)^{d/2} \geq \frac{1}{2}$ and so $\pE'[f^2 (\sum_{c} x_{i,c} - 1)] \geq 0$ for all $f$ with $\deg(f) \leq \frac{d}{4}$. Since $\pE'$ is a degree $1 + \frac{d}{2}$ pseudo-expectation, this means that $\pE'$ satisfies the constraint $\sum_c x_{i,c} - 1 \geq 0$, which finishes the proof.

%%%%%%%%%%%%%%%%%%%%%%%%%%%%%%%%%%%%%%%%%%%%%%%%%%%%%%%%%%%%%%%%%%%%%%%%%%%%%%%%
%%%%%%%%%%%%%%%%%%%%%%%%%%%%%%%%%%%%%%%%%%%%%%%%%%%%%%%%%%%%%%%%%%%%%%%%%%%%%%%%
\subsection{Proof of \cref{claim:indicatorsos}}
\label{sec:indicatorsos}
Let $f$ be any polynomial with $\cdeg(f) \leq (d-2)/2$. It suffices to show that $\pE^{\otimes \numcolors}[f^2 (1 - h_i)( \sum_{c} x_{i,c} - 2)] \geq 0$. 
For $2 \leq t \leq \numcolors$, let $g_{i}^{(t)} = \prod_{c \leq t} (1 - x_{i,c})$, $g_{i,c}^{(t)} = \prod_{c' \ne c, c' \leq t} (1 - x_{i,c'})$, and $h_i^{(t)} := \sum_{c \leq t} x_{i,c} g_{i,c}^{(t)} + g_i^{(t)}$. We show by induction on $t$ that for each $t \geq 0$, it holds that $\pE^{\otimes \numcolors}[f^2 (1 - h_i^{(t)})( \sum_{c\leq t} x_{i,c} - 2)] \geq 0$ for every $f$ where the coloring degree of $f$ on the first $t$ colors is at most $(d - 2)/2$, and $\cdeg(f) \leq d/2$.

The base case is when $t = 2$. In this case, we have $1 - h_i^{(t)} = 1 - x_{i, 1}(1 - x_{i,2}) - x_{i,2}(1 - x_{i,1}) - (1 - x_{i,1})(1 - x_{i,2})  = x_{i,1} x_{i,2}$, so $\pE^{\otimes \numcolors} [f^2 (1 - h_i^{(t)})( \sum_{c \leq t} x_{i,c} - 2)] = \pE^{\otimes \numcolors}[f^2 (x_{i,1} x_{i,2}) (x_{i,1} + x_{i,2} - 2)] =\pE^{\otimes \numcolors}[f^2(2 x_{i,1} x_{i,2} - 2 x_{i,1} x_{i,2})] = 0$. Note that since $f$ has coloring degree at most $(d-2)/2$ on the first colors, $\pE^{\otimes \numcolors}$ is always defined on each of these polynomials.

We now show the induction step. We observe that $h_i^{(t+1)} = (1 - x_{i, t+1})h_{i}^{(t)} + x_{i,t+1} g_{i}^{(t)}$. Let $f$ be a polynomial that has coloring degree $\leq (d-2)/2$ on the first $t+1$ colors, and $\cdeg(f) \leq d/2$. We have
\begin{flalign*}
&\pE^{\otimes \numcolors}[f^2(1 - h_i^{(t+1)}) (x_{i, t+1} + \sum_{c \leq t+1} x_{i,c} - 2)] =\pE^{\otimes \numcolors}[f^2 \Big((1 - x_{i, t+1})(1 - h_{i}^{(t)}) + x_{i, t+1} (1 - g_{i}^{(t)})\Big) \cdot (x_{i, t+1} + \sum_{c \leq t} x_{i,c} - 2)] \\
&= \pE^{\otimes \numcolors}[f^2 (1 - x_{i, t+1})(1 - h_{i}^{(t)}) (\sum_{c \leq t} x_{i,c} - 2)] + \pE^{\otimes \numcolors}[f^2 x_{i,t + 1} (1 - g_i^{(t)}) (1 + \sum_{c \leq t} x_{i,c} - 2)] \\
&=\pE^{\otimes \numcolors}[f^2 (1 - x_{i, t+1})^2 (1 - h_{i}^{(t)}) (\sum_{c \leq t} x_{i,c} - 2)] + \pE^{\otimes \numcolors}[f^2 x_{i,t + 1}^2 (1 - g_i^{(t)}) ( \sum_{c \leq t} x_{i,c} - 1)] \enspace.
\end{flalign*}
Since $f$ has coloring degree at most $(d-2)/2$ on the first $t+1$ colors, $f \cdot (1 - x_{i,t+1})$ has coloring degree at most $(d-2)/2$ on the first $t$ colors, and also $\cdeg(f\cdot(1 - x_{i,t+1}))$ is at most $d/2$, as on the $(t+1)$-th color it has degree at most $(d - 2)/2 + 1$, and on every other color it is either at most $(d-2)/2$ or $d/2$. So, $\pE^{\otimes \numcolors}[f^2 (1 - x_{i, t+1})^2 (1 - h_{i}^{(t)}) (\sum_{c \leq t} x_{i,c} - 2)] \geq 0$ by the induction hypothesis. We also observe that $f \cdot x_{i,t+1}$ has coloring degree at most $(d-2)/2$ on the first $t$ colors, and has coloring degree at most $d/2$.

It remains to show that $\pE^{\otimes \numcolors}[f^2 (1 - g_i^{(t)}) ( \sum_{c \leq t} x_{i,c} - 1)] \geq 0$ for all $t$ and for all $f$ with $\cdeg(f) \leq d/2$ and coloring degree at most $(d-2)/2$ in the first $t$ colors. We observe that $g_{i}^{(t)} x_{i,c} \equiv 0$ for all $c \leq t$, and so it suffices to show that $\pE^{\otimes \numcolors} [f^2\sum_{c \leq t} x_{i,c} ] \geq \pE^{\otimes \numcolors} [f^2(1 - g_i^{(t)} )]$. We do this by induction on $t$. In the base case, we have $\pE^{\otimes \numcolors}[f^2 x_{i,1}] = \pE^{\otimes \numcolors}[f^2 (1 - (1 - x_{i,1}))]$. For the induction step, we have
\begin{flalign*}
&\pE^{\otimes \numcolors}[f^2 (x_{i,t+1} + \sum_{c \leq t} x_{i,c})] \geq \pE^{\otimes \numcolors}[f^2 x_{i,t+1}] + \pE^{\otimes \numcolors}[f^2 (1 - g_{i}^{(t)})] \\
&\geq \pE^{\otimes \numcolors}[f^2 x_{i,t+1} g_i^{(t)}] + \pE^{\otimes \numcolors}[f^2 (1 - g_{i}^{(t)})] \\
&= \pE^{\otimes \numcolors}[f^2 (1 - (1 - x_{i,t+1}) g_i^{(t)})] = \pE^{\otimes \numcolors}[f^2 (1 - g_{i}^{(t+1)})] \enspace,
\end{flalign*}
where we use the fact that $f x_{i,t+1} g_i^{(t)}$ has coloring degree $\leq d/2$ since $f$ has coloring degree at most $(d-2)/2$ in the first $t+1$ colors, and that $(g_i^{(t)})^2 \equiv g_i^{(t)}$ modulo the hypercube constraints. This finishes the proof.

%%%%%%%%%%%%%%%%%%%%%%%%%%%%%%%%%%%%%%%%%%%%%%%%%%%%%%%%%%%%%%%%%%%%%%%%%%%%%%%%
%%%%%%%%%%%%%%%%%%%%%%%%%%%%%%%%%%%%%%%%%%%%%%%%%%%%%%%%%%%%%%%%%%%%%%%%%%%%%%%%
\subsection{Proof of \cref{claim:hypercontractivity}: hypercontractivity}
\label{sec:hypercontractivity}
Let $f$ be a multilinear polynomial with $\cdeg(f) \leq d/4$ with $\Pi_G^{\otimes \numcolors} f = f$.
Suppose that:
\begin{flalign}
\pE^{\otimes \numcolors}[f^2] \geq \lambda_1 \norm{f}_2^2\enspace, \label{eq:hyper1}\\
\pE^{\otimes \numcolors}[f^4] \leq \lambda_2 \norm{f^2}_2^2\enspace, \label{eq:hyper2}\\
\norm{f^2}_2^2 \leq C \norm{f}_2^4 \enspace. \label{eq:hyper3}
\end{flalign}
Then it follows that $\pE^{\otimes \numcolors}[f^4] \leq \lambda_2 \norm{f^2}_2^2 \leq \lambda_2 C \norm{f}_2^4 \leq \frac{\lambda_2 C}{\lambda_1^2} \pE^{\otimes \numcolors}[f^2]^2$. 

\cref{eq:hyper1} follows from the following lemma with $\lambda_1 := (\lambda n^{-O(d)})^{\deg(f)}$.
\begin{lemma}[Eigenvalue lower bound for $\pE^{\otimes \numcolors}$]
\label{lem:eiglb}
Suppose that for any multilinear $g$ with $\deg(g) \leq d/2$, $\pE[g^2] \geq \lambda \norm{\Pi_G g}_2^2$. Then for any multilinear $f$ of coloring degree $\leq d/2$, it holds that $\pE^{\otimes \numcolors}[f^2] \geq (\lambda n^{-O(d)})^{\deg(f)} \cdot \norm{\Pi_G^{\otimes \numcolors} f}_2^2$.
\end{lemma}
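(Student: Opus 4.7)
The plan is to combine a centered-basis decomposition of $\pE^{\otimes \numcolors}$ with a careful degree bookkeeping, exploiting the hypothesis spectral lower bound on $M_\pE$ block-by-block in the resulting block-diagonal structure.

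\emph{Reduction.} First I would note that, since $\pE$ satisfies the edge constraints, $\pE^{\otimes \numcolors}$ annihilates the multicolor edge ideal, so $\pE^{\otimes \numcolors}[f^2] = \pE^{\otimes \numcolors}[(\Pi_G^{\otimes \numcolors} f)^2]$; hence we may replace $f$ by $\Pi_G^{\otimes \numcolors} f$ and assume $f = \Pi_G^{\otimes \numcolors} f$ throughout.

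\emph{Block-diagonalization by a centered basis.} In each color $c$ I would introduce the centered basis $\tilde x_c^T := x_c^T - \pE[x^T]$ for $T \neq \emptyset$, with $\tilde x_c^\emptyset := 1$. Since $\pE[\tilde x_c^T] = 0$ for $T \neq \emptyset$ and $\pE^{\otimes \numcolors}$ is a tensor product, the tensor basis $\{\prod_c \tilde x_c^{T_c}\}$ makes $\pE^{\otimes \numcolors}$ block-diagonal with blocks indexed by the color support $S := \{c : T_c \neq \emptyset\}$, the $S$-block being $\bigotimes_{c \in S}(B - aa^\top)$ with $B := M_\pE|_{U^+}$ and $a := (\pE[x^T])_{T \neq \emptyset}$. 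Applying Schur complement to the hypothesis $M_\pE \succeq \lambda \Pi_G$, together with $\Pi_G 1 = 1$ (so that $\Pi_G$ is block-diagonal as $[1] \oplus \Pi_G^+$), yields $B - aa^\top \succeq \lambda \Pi_G^+$, and hence each $S$-block is $\succeq \lambda^{|S|}\bigotimes_{c \in S}\Pi_G^+$. Writing $f = \sum_S \beta_S$ in the centered basis, we obtain
\[
\pE^{\otimes \numcolors}[f^2] \;\geq\; \sum_S \lambda^{|S|}\snorm{(\Pi_G^+)^{\otimes S}\beta_S}_2\,.
\]

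\emph{Degree control via induction on $|S|$.} I would next prove that $\beta_{S,(T_c)} = 0$ whenever $\sum_c|T_c| > \deg(f)$ (and in particular $\beta_S = 0$ for $|S| > \deg(f)$) by induction on $|S|$ from largest to smallest. The centered basis element $\prod_{c\in S}\tilde x_c^{T_c}$ has leading monomial $\prod_c x_c^{T_c}$ of degree $\sum_c|T_c|$; at the maximal $|S|$ no other centered basis element contributes this monomial, while at smaller $|S|$ every contribution from a larger-$|S'|$ element that shares the fixed coordinates $(T_c)_{c \in S}$ vanishes by the inductive hypothesis. Hence $\beta_{S,(T_c)}$ equals the coefficient of $\prod_c x_c^{T_c}$ in $f$, which is $0$ by the degree bound. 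Consequently every surviving term in the sum above satisfies $\lambda^{|S|} \geq \lambda^{\deg(f)}$ (using $\lambda \leq 1$).

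\emph{Change of basis and the main obstacle.} The last step is to translate the centered-basis quantity $\sum_S \snorm{(\Pi_G^+)^{\otimes S}\beta_S}_2$ into the monomial-basis norm $\snorm{\Pi_G^{\otimes \numcolors} f}_2 = \snorm{f}_2$. The change of basis between centered and monomial is unipotent triangular with off-diagonal entries equal to products of the pseudo-moments $\pE[x^T] \in [0, 1]$ (nonnegativity from positivity of $\pE$; the upper bound from booleanity together with pseudo-expectation Cauchy-Schwarz). Restricted to the subspace of polynomials of degree at most $\deg(f)$, whose dimension is bounded by $n^{O(d \cdot \deg(f))}$, both this change of basis and its inverse have operator norm at most $n^{O(d \cdot \deg(f))}$, and absorbing this into the $|S|$-th factor yields the claimed $(\lambda n^{-O(d)})^{\deg(f)}$ bound. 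The main technical obstacle is precisely this step: the projection $\Pi_G^{\otimes \numcolors}$ is block-diagonal in the monomial basis but not in the centered one (there is ``leakage'' into the constant direction from each $\Pi_G\tilde x_c^T = \Pi_G^+ x_c^T - \pE[x^T]$), so a careful Moebius-style inversion between the two bases is needed in order to ensure that the operator-norm loss is $n^{O(d)}$ per unit of degree rather than accumulating over all $\numcolors$ colors.
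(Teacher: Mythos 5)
Your proposal is essentially the paper's argument: center the basis so the constant polynomial is an eigenvector, block-diagonalize $\pE^{\otimes \numcolors}$ by color support, control the support of the centered expansion of $f$ by total degree, and translate norms back. Two remarks. First, your Schur-complement derivation of $B-aa^{\top}\succeq\lambda\Pi_G^{+}$ from $M\succeq\lambda\Pi_G$ (where $M$ is the moment matrix in the monomial basis) is a clean improvement over the paper's internal claim, which establishes only $\M\succeq\lambda n^{-O(d)}\Pi_G$ in the centered basis via a cruder comparison of $\Norm{g}_2$ with $\Norm{g'}_2$; your version loses nothing per block and pushes all slack into the basis change. Second, the ``main technical obstacle'' you flag does not exist. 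Since $\pE$ satisfies the edge constraints, $\pE[x^T]=0$ for every non-independent $T$, hence $e'_T=e_T$ for those $T$; consequently $\Pi_G e'_T=e'_T$ for independent $T$ and $\Pi_G e'_T=0$ otherwise, so $\Pi_G$ (and therefore $\Pi_G^{\otimes\numcolors}$) is diagonal in the centered basis as well. There is no leakage into the constant direction and no Moebius-style inversion is needed; once $f=\Pi_G^{\otimes\numcolors}f$, the projections $(\Pi_G^{+})^{\otimes S}$ act as the identity on the blocks $\beta_S$. The basis change then reduces to the paper's elementary $\ell_1$ count: each monomial-basis coefficient $f_{(S_c)}$ is a signed sum, with multipliers $\prod_c\pE[x^{T_c}]\in[-1,1]$, of the at most $2^{\deg(f)}$ centered coefficients $f'_{(T_c)}$ with $(T_c)$ extending $(S_c)$; combined with your observation that $f'$ is supported on tuples of total degree $\leq\deg(f)$, of which there are at most $(n\numcolors)^{\deg(f)}$, this gives $\Norm{f}_2\leq n^{O(\deg(f))}\Norm{f'}_2$, slightly better than the $n^{O(d\cdot\deg(f))}$ you targeted.
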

We postpone the proof of \cref{lem:eiglb} for now, and finish the proof of \cref{claim:hypercontractivity}.

Let $g = \sum_{m} g_m \cdot m$ be a polynomial of degree $\deg(g)$ with $\cdeg(g) \leq d/2$, where $m$ is a monomial and $g_m$ is the coefficient of $g$ for the monomial $m$. Since $\pE^{\otimes \numcolors}$ satisfies the booleanity constraints, it follows that $\pE[m] \leq 1$ for all monomials $m$. Hence, $\pE[g^2] \leq \sum_{m_1, m_2} \abs{g_{m_1} g_{m_2}} = (\sum_{m} \abs{g_m})^2 \leq (n\numcolors)^{O(\deg(g))} \norm{g}_2^2$, by Cauchy-Schwarz, as $g$ is supported on at most $(n \numcolors)^{O(\deg(g))}$ distinct monomials. Since $(n\numcolors)^{O(\deg(g))} \leq n^{O(\deg(g))}$ as $\numcolors < n$, it follows that $\pE^{\otimes \numcolors}[g^2] \leq n^{O(\deg(g))} \norm{g}_2^2$, and so (setting $g = f^2$) \cref{eq:hyper2} holds with $\lambda_2 := n^{O(\deg(f))}$.

Finally, for a polynomial $g$ and monomial $m$ let $g_m$ be the coefficient of $g$ on $m$. For any $m$ of degree $\leq 2 \deg(f)$, we have that $f^2_m = \sum_{m_1, m_2: m_1 \cdot m_2 = m} f_{m_1}f_{m_2}$. We observe that this is equal to $\ip{v^{(m)}, f}$, where $v^{(m)}$ is the vector defined as $v^{(m)}_{m_2} \defeq f_{m_1}$ where $m_1 \cdot m_2 = m$ (and is $0$ if no such $m_1$ exists). It follows that $\norm{v^{(m)}}_2 \leq \norm{f}_2$, and hence that $\abs{\ip{v^{(m)}, f}} \leq \max_{v : \norm{v}_2 \leq \norm{f}_2} \abs{\ip{v,f}} = \norm{f}_2^2$. Hence, $\abs{f^2_m}^2 \leq \norm{f}_2^4$, and so $\norm{f^2}_2^2 = \sum_{m : \deg(m) \leq 2 \deg(f)} \abs{f^2_m} \leq (n \numcolors)^{O(\deg(f))} \norm{f}_2^4 = n^{O(\deg(f))} \norm{f}_4^2$, and so \cref{eq:hyper3} holds with $C := n^{O(\deg(f))}$.

Combining, we conclude that $\pE^{\otimes \numcolors}[f^4] \leq  \pE^{\otimes \numcolors}[f^2]^2/(\lambda n^{-O(d)})^{2 \deg(f)}$, which finishes the proof.

%%%%%%%%%%%%%%%%%%%%%%%%%%%%%%%%%%%%%%%%%%%%%%%%%%%%%%%%%%%%%%%%%%%%%%%%%%%%%%%%
\subsubsection{Proof of \cref{lem:eiglb}: eigenvalue lower bound for $\pE^{\otimes \numcolors}$}
\label{sec:eiglb}
\parhead{Proof outline.} The proof proceeds in three steps. First, we show that the moment matrix of the independent set pseudo-expectation $\pE$, when written in a basis so that the constant polynomial $1$ is an eigenvector, has an eigenvalue lower bound of $\lambda n^{-O(d)}$. To show that this property implies the desired eigenvalue lower bound, we observe that any $f$ of total degree $\leq d$ is a linear combination of monomials that use at most $\deg(f)$ colors. Further,  (the coefficient vector of) each such monomial is a linear combination of tensor products of eigenvectors of the $\pE$ that use a ``non-$1$'' eigenvector in at most $\deg(f)$ modes of the tensor and thus is in the span of eigenvectors of $\pE^{\otimes \numcolors}$ (in the new basis) with eigenvalue at least $(\lambda n^{-O(d)})^{\deg(f)}$. This reasoning immediately implies that $f$, when written in the chosen basis, has the desired eigenvalue lower bound. To finish the proof, we argue that the change of basis does not modify $\norm{f}_2$ by too much. 

We now proceed with implementing the above proof plan. For every $S \subseteq [n]$ with $\abs{S} \leq d$, recall that we can express any multilinear polynomial $g$ with degree $\leq d$ as a linear combination of the monomials $x_S \defeq \prod_{i \in S} x_i$. Let $g_S$ be the coefficient of $g$ on the monomial $S$, so that $g = \sum_{\abs{S} \leq d} g_S x_S$. Let $e_S$ be the $S$-th standard basis vector in $\R^{{n \choose \leq d}}$. Then $g$ (as a vector of coefficients) is $\sum_{S} g_S e_S$. For $S \ne \emptyset$, define $e'_S$ as $e_S - \pE[x_S] \cdot e_{\emptyset}$. We can write $g$ uniquely in the $e'_S$ basis as $g = \sum_{S} g'_S e'_S$, where $g'_S = g_S$ for $S \ne \emptyset$, and $g'_{\emptyset} = g_{\emptyset} + \sum_{S \ne \emptyset} g_S \pE[x_S] = \pE[g]$.
Note that, if we let $x'_S := x_S - \pE[x_S]$ for $S \ne \emptyset$ and $x'_{\emptyset} := x_{\emptyset} = 1$, then $g = \sum_{S} g'_S x'_S$ as a polynomial.

Let $\M$ be the moment matrix of $\pE$ in the $x'$ basis. This matrix is indexed by sets $S, S' \subseteq [n]$ with $\abs{S}, \abs{S'} \leq d/2$, and $\M(S,S') = \pE[x'_S x'_{S'}]$, which is equal to $\pE[(x_S - \pE[x_S])(x_{S'} - \pE[x_{S'}])] = \pE[x_S x_{S'}] - \pE[x_S]\pE[x_{S'}]$ if $S,S' \ne \emptyset$, equal to $0$ if exactly one of $S,S'$ is $\emptyset$, and equal to $1$ if $S = S' = \emptyset$. This implies that $e'_{\emptyset}$ is an eigenvector of $\M$ with eigenvalue $1$. We also observe that if $g$ has degree $\leq d/2$ and $g'$ is the coefficient vector of $g$ in the $e'$ basis, then $\pE[g^2] = g'^{\top} \M g'$.

We now prove the following eigenvalue lower bound on $\M$.
\begin{claim}
$\M \succeq \lambda n^{-O(d)} \Pi_G$.
\end{claim}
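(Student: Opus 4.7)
The plan is to exploit the block-diagonal structure of $\M$ in the chosen basis. Since $\M(\emptyset,S)=0$ for $S\ne\emptyset$ and $\M(\emptyset,\emptyset)=1$, $\M$ decomposes as the direct sum of the scalar block on the span of $e_\emptyset$ and its restriction $\M''$ to the subspace $W:=\{w:w_\emptyset=0\}$. Because the empty set contains no edge, $\Pi_G e_\emptyset=e_\emptyset$, and so $\Pi_G$ admits the same block decomposition $\Pi_G = e_\emptyset e_\emptyset^\top \oplus \Pi_G''$. Applying the hypothesis with $g\equiv 1$ yields $1\ge\lambda$, so it suffices to prove $\M''\succeq\lambda\Pi_G''$ on $W$.

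For $w\in W$, set $g:=\sum_{S\ne\emptyset}w_S x'_S$, a multilinear polynomial of degree $\le d/2$. Expanding into the standard basis, its coefficient vector $v$ satisfies $v_S=w_S$ for $S\ne\emptyset$ and $v_\emptyset=-\sum_{S\ne\emptyset}w_S\pE[x_S]$. Therefore
\[
\|\Pi_G g\|_2^2 \;=\; v_\emptyset^2 \;+\; \sum_{\substack{S\ne\emptyset \\ S \text{ contains no edge}}} w_S^2 \;\ge\; \sum_{\substack{S\ne\emptyset \\ S \text{ contains no edge}}} w_S^2 \;=\; w^\top \Pi_G w,
\]
where the last equality uses $w_\emptyset=0$, and the inequality uses $v_\emptyset^2\ge 0$. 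The hypothesis now gives
\[
w^\top \M w \;=\; \pE[g^2] \;\ge\; \lambda \|\Pi_G g\|_2^2 \;\ge\; \lambda\, w^\top \Pi_G w,
\]
which establishes $\M''\succeq\lambda\Pi_G''$ on $W$, and combined with the $e_\emptyset$-block yields $\M\succeq\lambda\Pi_G$. This is in fact strictly stronger than the stated bound $\M\succeq\lambda n^{-O(d)}\Pi_G$, so the additional $n^{-O(d)}$ slack is unneeded at this step.

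The proof is essentially bookkeeping and I do not anticipate a serious obstacle. The one subtle point is that moving from the standard monomial basis $\{x_S\}$ to the centered basis $\{x'_S\}$ introduces a compensating constant term $v_\emptyset$ in $g$, but because $\Pi_G$ fixes the $\emptyset$-coordinate this contributes only a non-negative $v_\emptyset^2$ summand to $\|\Pi_G g\|_2^2$, so the change of basis can only help us. A small sanity check that is worth recording is that $\pE[x_S]=0$ whenever $S$ contains an edge (by the independent set axioms), so $v_\emptyset$ in fact involves only edge-free monomials, confirming that the two bases agree on the edge-containing coordinates.
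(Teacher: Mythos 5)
Your proof is correct, and it takes a genuinely different — and sharper — route than the paper's. The paper's argument parameterizes an arbitrary vector $g'$ in the primed ($e'$) basis with $g'_S = 0$ for non-independent $S$, forms the corresponding polynomial $g$ in the standard basis, applies the hypothesis to get $\pE[g^2] \ge \lambda \|g\|_2^2$, and then pays a multiplicative $n^{O(d)}$ to pass from $\|g\|_2$ back to $\|g'\|_2$ via the crude estimate $\pE[g^2] \le n^{O(d)}\|g\|_2^2$ (together with $g'_\emptyset = \pE[g]$ and Cauchy--Schwarz). This is where their $n^{-O(d)}$ slack comes from. You instead split off the $e_\emptyset$ coordinate before comparing anything: $\M$ and $\Pi_G$ are both block-diagonal with respect to $\mathrm{span}(e_\emptyset) \oplus W$, the $e_\emptyset$ block is handled by $1 \ge \lambda$, and on $W$ the change of basis moves mass \emph{only} into the $\emptyset$-coordinate, which contributes the nonnegative $v_\emptyset^2$ to $\|\Pi_G g\|_2^2$ and therefore can be thrown away for free. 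This yields $\M \succeq \lambda \Pi_G$ with no loss, of which the stated $\M \succeq \lambda n^{-O(d)} \Pi_G$ is an immediate corollary. The improvement is immaterial downstream — the rest of the proof of the eigenvalue lower bound picks up further $n^{-O(d)}$ factors anyway — but your version is cleaner and avoids the norm-comparison step entirely.
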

\begin{proof}
Let $S$ with $\abs{S} \leq d/2$ be a set that is not an independent set in $G$, i.e.\ that $\Pi_G e_S = 0$. We observe that $\M e'_S = 0$. Indeed, the $T$-th entry of $\M e'_S$ is $\M(T, S) = \pE[x'_T x'_S] = \pE[x_T x_S] - \pE[x_T] \pE[x_S] = 0 - 0 = 0$ for $T \ne \emptyset$, and is $0$ if $T = \emptyset$ because $M(\emptyset, S) = 0$ for $S \ne \emptyset$.

Now, let $g' = \sum_{S: \abs{S} \leq d} g'_S e'_{S}$ be arbitrary. By the above, without loss of generality we may assume that $g'_S = 0$ for all $S$ that is not an independent set in $G$. Let $g$ be the corresponding polynomial in the $x$ basis, so that $g = \sum_{S} g_S x_S$, where $g_\emptyset = g'_{\emptyset}  - \sum_{S \ne \emptyset} g'_S \pE[x_S]$ and $g_S = g'_S$ for all $S \ne \emptyset$. Notice that $\pE[g] = g'_{\emptyset}$. We observe that $\Pi_G g = g$, as $g_S = g'_S = 0$ for all $S$ that is not an independent set in $G$. Now, we have that $g'^{\top} M g' = \pE[g^2] \geq \lambda \norm{\Pi_G g}_2^2 = \lambda \norm{g}_2^2$, by our eigenvalue lower bound assumption on $\pE$.

It remains to relate $\norm{g}_2^2$ and $\norm{g'}_2^2$. We have that $\norm{g'}_2^2 = \sum_{\abs{S} \leq d/2} g'^2_S = \pE[g]^2 +  \sum_{S \ne \emptyset} g'^2_S \leq \pE[g]^2 + \norm{g}_2^2 \leq \pE[g^2] + \norm{g}_2^2 \leq (n^{O(d)} + 1) \norm{g}_2^2$, as $\pE[g^2] \leq n^{O(d)} \norm{g}_2^2$ since $0 \leq \pE[x_S x_T] \leq 1$ for all $S,T$, and there are at most $n^{O(d)}$ such pairs. Hence, $g'^{\top} M g' \geq \lambda n^{-O(d)} \norm{g'}_2^2$ when $g' = \Pi_G g'$, and so $M \succeq \lambda n^{-O(d)} \Pi_G$.
\end{proof}

We have already shown that $e'_{\emptyset}$ is an eigenvector of $\M$ with eigenvalue $1$, and that the zero eigenvectors of $\M$ are the vectors $e'_{S}$ where $S$ is not an independent set in $G$. Let $f_0 = e'_{\emptyset}, f_1, \dots, f_r$ be the eigenvectors of $\M$ with nonzero eigenvalues $\lambda_0 = 1, \lambda_1, \dots, \lambda_t$, where $\lambda_i \geq \lambda n^{-O(d)}$ for $1 \leq i \leq t$. Let $\M^{\otimes \numcolors}$ be the $\numcolors$-th tensor of $\M$. Let $f_{i}^{(c)}$ denote the $i$-th eigenvector in the $c$-th component of the tensor. The eigenvectors of $\M^{\otimes \numcolors}$ are the vectors $\bigotimes_{c = 1}^\numcolors f_{i_c}^{(c)}$. We additionally observe that $\cV^{(c)} \defeq \Span{f_i^{(c)} : i > 0} = \Span{e'^{(c)}_S : \abs{S} > 0, \Pi_G e_S = e_S}$, as $f_0^{(c)} = {e'_{\emptyset}}^{(c)}$. 

Let $f$ be a multilinear polynomial with $\cdeg(f) \leq d/2$ in the variables $\{x_{i,c}\}_{i \in [n], c \in [\numcolors]}$. That is, $f$ is a vector in $\Span{ \bigotimes_{c =1}^{\numcolors} e^{(c)}_{S_c} : \abs{S_c} \leq d/2 \ \forall c \in [\numcolors]}$, where $e^{(c)}_S$ denotes the $S$-th standard basis vector in the $c$-th component of the tensor. As before, we can write $f$ as a vector $f'$ in the $e'$ basis, so $f' = \sum_{(S_1, \dots, S_\numcolors) : \abs{S_c} \leq d/2 \ \forall c \in [\numcolors]} f'_{(S_1, \dots, S_\numcolors)} \bigotimes_{c = 1}^\numcolors e'^{(c)}_{S_c}$. We again observe that $\pE^{\otimes \numcolors}[f^2] = f'^{\top} \M^{\otimes \numcolors} f'$, because the $((S_1, \dots, S_\numcolors),(T_1 ,\dots, T_\numcolors))$-th entry of $\M^{\otimes \numcolors}$ is exactly $\prod_{c = 1}^\numcolors \pE[x'_{S_c} x'_{T_c}] = \pE^{\otimes \numcolors}[\prod_{c = 1}^\numcolors x'_{S_c} x'_{T_c}]$.  Note that by the structure of the zero eigenvectors of $\M$, if $f$ satisfies $\Pi_G^{\otimes \numcolors} f = 0$, then $f$ is an eigenvector of $\M^{\otimes \numcolors}$ with eigenvalue $0$. In particular, without loss of generality we can assume that $f = \Pi_G^{\otimes \numcolors} f$, as by the above we can discard the component of $f$ in the kernel of $\Pi_G^{\otimes \numcolors}$.

Let $(S_1, \dots, S_\numcolors)$ be such that $f'_{(S_1, \dots, S_\numcolors)} \ne 0$. We must have $S_c = \emptyset$ for all but at most $\deg(f)$ of the $c$'s. This is because $f$ has degree $\deg(f)$, and so in particular every monomial in $f$ can only use at most $\deg(f)$ distinct colors. This shows that $f' \in \cV := \Span{ \bigotimes_{c \in C} \cV^{(c)} \bigotimes_{c \notin C} e'^{(c)}_{\emptyset} : C \subseteq [\numcolors], \abs{C} \leq \deg(f)}$. We observe that $\cV$ is the span of eigenvectors of $\M$ of the form $\bigotimes_{c \in C} f_{i_c}^{(c)} \bigotimes_{c \notin C} e'^{(c)}_{\emptyset}$ for $\abs{C} \leq \deg(f)$. Since each of these vectors is an eigenvector with eigenvalue at least $(\lambda n^{-O(d)})^{\abs{C}} \cdot 1^{\numcolors - \abs{C}} \geq (\lambda n^{-O(d)})^{\deg(f)}$, it follows that $f'^{\top} \M^{\otimes \numcolors} f' \geq (\lambda n^{-O(d)})^{\deg(f)}\norm{f'}_2^2$. Thus, $\pE^{\otimes k}[f^2] \geq (\lambda n^{-O(d)})^{\deg(f)}\norm{f'}_2^2$.

It remains to relate $\norm{f'}_2^2$ and $\norm{f}_2^2$. Fix $(S_1, \dots, S_\numcolors)$ with $\abs{S_c} \leq d/2$. Let $(T_1, \dots, T_\numcolors)$ with $\abs{T_c} \leq d/2$. We say that $(T_1, \dots, T_\numcolors)$ \emph{extends} $(S_1, \dots, S_\numcolors)$ if for every $c$, either $T_c = S_c$ or $T_c \ne \emptyset$ and $S_c = \emptyset$. The \emph{parity} of the extension is the parity of the number of $c$ where $T_c \ne \emptyset$ and $S_c = \emptyset$. We observe that $f_{(S_1, \dots, S_\numcolors)} = \sum_{(T_1, \dots, T_\numcolors) \ \text{extending } (S_1, \dots, S_\numcolors)} \text{(parity of extension)} \cdot f'_{(T_1, \dots, T_\numcolors)}$. This is because $e'_{(T_1, \dots, T_\numcolors)} = \bigotimes_{c = 1}^\numcolors e'_{T_c} = \bigotimes_{c = 1}^\numcolors (e_{T_c} - e_{\emptyset})$. We thus see that $\norm{f}_1 \leq \sum_{(T_1, \dots, T_\numcolors)} \abs{f'_{(T_1, \dots, T_\numcolors)}} \cdot n_{(T_1, \dots, T_\numcolors)}$, where $n_{(T_1, \dots, T_\numcolors)}$ is the number of $(S_1, \dots, S_\numcolors)$ that $(T_1, \dots, T_\numcolors)$ extends. We have shown that if $f'_{(T_1, \dots, T_\numcolors)} \ne 0$ then it must be the case that $T_{c} \ne \emptyset$ for at most $\deg(f)$ of the $c$'s. Hence, such $(T_1, \dots, T_\numcolors)$ can only extend at most $2^{\deg(f)}$ of the $(S_1, \dots, S_\numcolors)$'s, as each of the $(S_1, \dots, S_\numcolors)$'s is obtained by changing a subset of the $T_c$'s to be empty. Hence, $\norm{f}_1 \leq 2^{\deg(f)} \norm{f'}_1$. Since $f'$ has at most $(n\numcolors)^{\deg(f)} \leq n^{2 \deg(f)}$ nonzero coefficients, we get that $\norm{f}_2 \leq n^{ \deg(f)} \cdot 2^{\deg(f)} \norm{f'}_2$, and so we conclude that $\norm{f}_2^2 \leq n^{O(\deg(f))} \norm{f'}_2^2$, which finishes the proof.

%%%%%%%%%%%%%%%%%%%%%%%%%%%%%%%%%%%%%%%%%%%%%%%%%%%%%%%%%%%%%%%%%%%%%%%%%%%%%%%%
%%%%%%%%%%%%%%%%%%%%%%%%%%%%%%%%%%%%%%%%%%%%%%%%%%%%%%%%%%%%%%%%%%%%%%%%%%%%%%%%
%%%%%%%%%%%%%%%%%%%%%%%%%%%%%%%%%%%%%%%%%%%%%%%%%%%%%%%%%%%%%%%%%%%%%%%%%%%%%%%%
\section{Proof of \cref{thm:coloringlb}: coloring lower bound}
\label{sec:fmain}
We now prove \cref{thm:coloringlb} (restated below in the language of pseudo-expectations) from \cref{thm:reduction}. In this section, we assume familiarity with the planted clique pseudo-expectation of \cite{DBLP:conf/focs/BarakHKKMP16}. 
\begin{theorem*}[\cref{thm:coloringlb}, restated]
For sufficiently large $n$, for any $\eps \in (\Omega(\sqrt{\frac{1}{\log n}}), \frac{1}{2})$, with probability $1 - 1/\poly(n)$ over the draw of $G \sim G(n,1/2)$, there is a degree $d = O(\eps^2 \log n)$ coloring pseudo-expectation $\pE$ using $k = n^{\frac{1}{2} + \eps}$ colors.
\end{theorem*}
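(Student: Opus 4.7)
By the reduction in \cref{thm:reduction}, it suffices to exhibit, with probability $1 - 1/\poly(n)$ over $G \sim G(n,1/2)$, a degree-$d$ independent set pseudo-expectation $\pE$ on $G$ satisfying the two covering properties with $\numlb = n^{\frac{1}{2}+\epsilon}$ and $\lambda = n^{-O(d)}$, where $d = O(\epsilon^2 \log n)$. The plan is to instantiate the planted clique SoS lower bound pseudo-expectation of~\cite{DBLP:conf/focs/BarakHKKMP16}. Since $G(n,1/2)$ is invariant under graph complementation in distribution, cliques in the complement $\bar G$ are precisely independent sets in $G$, so any valid degree-$d$ planted-clique pseudo-expectation on $\bar G$ is automatically a valid degree-$d$ independent set pseudo-expectation on $G$ in the same variables $\{x_i\}_{i \in [n]}$. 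I would use their construction with planted clique size $\omega = n^{1/2-\epsilon}$ and degree $d = O(\epsilon^2 \log n)$, both of which lie in the parameter regime BHKKMP handles: with probability $1 - 1/\poly(n)$ over $\bar G \sim G(n,1/2)$, there is a valid degree-$d$ pseudo-expectation $\pE$ with $\pE[x_i] = \omega/n = n^{-1/2-\epsilon} = 1/\numlb$ for every $i$, which immediately verifies the first covering property.

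The second covering property---the quantitative eigenvalue lower bound $\pE[f^2] \geq \lambda \norm{\Pi_G f}_2^2$---is the main technical obstacle, since BHKKMP explicitly prove only $\pE[f^2] \geq 0$. To extract the quantitative bound, I would revisit their positivity argument, which decomposes the moment matrix as $M \approx \mathcal{R} + \mathcal{E}$, where $\mathcal{R}$ is the structured reference matrix coming from pseudo-calibration (essentially a tensor power of a vector with entries of order $(\omega/n)^{\abs{S}/2}$ on independent-set coordinates and $0$ elsewhere), and $\mathcal{E}$ is a graphical-matrix error term whose spectral norm is controlled to be $n^{-\Omega(d)}$ smaller than $\mathcal{R}$'s minimum nonzero eigenvalue on the range of $\Pi_G$. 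The key quantitative point is that $\mathcal{R}$, restricted to the $\Pi_G$-subspace, is PSD with minimum nonzero eigenvalue at least $(\omega/n)^{O(d)} = n^{-O(d)}$; combining this with the spectral deviation bound yields $\pE[f^2] \geq n^{-O(d)} \norm{\Pi_G f}_2^2$ as required. This is essentially a re-packaging of BHKKMP's positivity argument, but it is the step that requires the most care, since we must track the eigenvalue bound through the graphical-matrix decomposition rather than discarding all but the sign.

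With both hypotheses verified, \cref{thm:reduction} produces a coloring pseudo-expectation of degree $d' = 1 + d/2 = O(\epsilon^2 \log n)$ using $\numcolors = O(\numlb \cdot d \cdot \log(n^d/\lambda)) = n^{\frac{1}{2}+\epsilon} \cdot \polylog(n)$ colors. To reach exactly $k = n^{\frac{1}{2}+\epsilon}$ colors as stated, I would observe that any coloring pseudo-expectation with $\numcolors$ colors extends trivially to one with $k \geq \numcolors$ colors by padding with dummy color classes in which every variable is assigned pseudo-expectation $0$; this preserves booleanity, edge, and sum constraints as well as positivity. Hence it suffices to run the entire argument with $\epsilon' := \epsilon - \Theta(\log\log n/\log n)$ in place of $\epsilon$. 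The hypothesis $\epsilon \geq \Omega(\sqrt{1/\log n}) \gg \log\log n/\log n$ keeps $\epsilon'$ in the admissible range, and the factor $n^{\epsilon - \epsilon'}$ fully absorbs the $\polylog(n)$ blow-up in $\numcolors$, delivering a coloring pseudo-expectation with at most $n^{\frac{1}{2}+\epsilon}$ colors and completing the proof.
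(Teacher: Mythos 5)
Your high-level plan matches the paper's: instantiate the BHKKMP planted-clique pseudo-expectation on $\bar G$ with $\omega = n^{1/2-\eps}$, pass through graph complementation to obtain an independent set pseudo-expectation on $G$, verify the two covering hypotheses, apply \cref{thm:reduction}, and absorb the $\polylog(n)$ blow-up in the number of colors by perturbing $\eps$. The padding/perturbation step at the end is fine and is essentially how the paper handles the slack.

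There is, however, a genuine gap in your verification of the first covering property. You assert that the BHKKMP pseudo-expectation satisfies $\pE[x_i] = \omega/n$ \emph{for every} $i$. This is false: $\pE_G[x_i]$ is a graph-dependent random quantity given by a weighted sum of Fourier characters $\chi_T(G)$, and the leading $T = \emptyset$ term contributes exactly $\omega/n$, but the remaining terms fluctuate with $G$. What BHKKMP's theorem gives you is control over the \emph{aggregate} $\pE_G[\sum_i x_i] = \omega(1 \pm n^{-\Omega(\eps)})$, not a pointwise bound for each coordinate. To apply \cref{thm:reduction} you need $\pE[x_i] \geq 1/\numlb$ \emph{simultaneously for all $i$}, and passing from the sum to a uniform per-vertex lower bound is not automatic. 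The paper closes this gap with a separate concentration argument (\cref{claim:xi} via \cref{lem:chiconc}): it bounds high moments of $\sum_{T \in \cT} \chi_T(G)$ for each isomorphism class $\cT$ of shapes, union-bounds over classes and degrees, and concludes that with probability $1 - n^{-\log n}$ every $\pE_G[x_i]$ lies within a $1 \pm n^{-\eps/2}$ factor of $\omega/n$. Without this (or an equivalent pointwise concentration estimate), the first hypothesis of \cref{thm:reduction} is unverified and your proof does not go through.

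A smaller remark: you treat extracting the quantitative eigenvalue lower bound $\pE[f^2] \geq n^{-O(d)}\norm{\Pi_G f}_2^2$ as nontrivial work requiring a re-examination of BHKKMP's positivity argument. The paper instead cites this directly as part of BHKKMP's theorem (their proof yields $\lambda = \Omega((\omega/n)^{d+1})$ on the non-clique-killed subspace). Your sketch of how one would re-derive it from the decomposition $M \approx \mathcal{R} + \mathcal{E}$ is consistent with what BHKKMP actually do, so this is more a matter of attribution than of substance; the real missing piece is the per-vertex concentration bound above.
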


We begin by recalling the main theorem of~\cite{DBLP:conf/focs/BarakHKKMP16}. 
\begin{theorem}[\cite{DBLP:conf/focs/BarakHKKMP16}]
There is an absolute constant $C$ such that for $n$ sufficiently large, $C /\sqrt{\log n} \leq \eps < \frac{1}{2}$, $\omega = n^{\frac{1}{2} - \eps}$, and $d = (\eps/C)^2 \log n$,  with probability $1 - 1/\poly(n)$ over $G \sim G(n,1/2)$, the operator $\pE_{G}$ defined in \cite{DBLP:conf/focs/BarakHKKMP16} satisfies:
\begin{enumerate}
\item $\pE_G[1] = 1 \pm n^{-\Omega(\eps)}$,
\item $\pE_G[ \sum_{i} x_i] = \omega (1 \pm n^{-\Omega(\eps)})$,
\item $\pE_G[x_S] = 0$ for all $\abs{S} \leq d$ that is not a clique in $G$,
\item $\pE_G[f^2] \geq \lambda \norm{\Pi'_G f}_2^2$ where $\lambda = \Omega\left(\left(\frac{\omega}{n} \right)^{d + 1}\right)$ and $\Pi'_G$ is the projection onto $x_S$ for $S$ a clique in $G$.
\end{enumerate}
\end{theorem}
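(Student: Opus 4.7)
The theorem is essentially the main technical result of \cite{DBLP:conf/focs/BarakHKKMP16}, and the plan is to follow their strategy, which proceeds in four stages: pseudo-calibration, verification of normalization and slack, an approximate PSD decomposition of the moment matrix, and spectral norm bounds on graphical matrices via the trace method. The operator $\pE_G$ is defined via \emph{pseudo-calibration} against the planted distribution: in the planted model a random clique $K$ of size $\omega$ is added to $G \sim G(n,1/2)$, and $\pE_G[x_S]$ is obtained by writing the indicator $\prod_{i \in S}\mathbf{1}[i \in K]$ as a Fourier expansion in the edges of $G$ (with $\pm 1$ edge variables), truncating to Fourier characters supported on at most $d$ vertices of degree at most $d$, and evaluating on the input graph $G$. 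Equivalently, $\pE_G[x_S]$ is a sum over ``shapes'' (small graphs $\alpha$ containing $S$) of $(\omega/n)^{|V(\alpha)|}$ times a graphical matrix entry of $G$ indexed by $\alpha$.

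\emph{Properties (1)--(3).} Property (3) is automatic from pseudo-calibration: if $S$ is not a clique in $G$ then every shape contributing to $\pE_G[x_S]$ forces the corresponding edge indicator in $G$ to be $0$, so the truncation evaluates to $0$. Properties (1) and (2) follow by direct computation: the dominant term in $\pE_G[1]$ is the empty shape contribution equal to $1$, and in $\pE_G[\sum_i x_i]$ the dominant term is the single-vertex shape contribution equal to $\omega$. All remaining shape contributions are sums of independent $\pm 1$ Fourier characters of $G$ scaled by $(\omega/n)^{\geq 1}$; by Chebyshev/Hoeffding the fluctuations concentrate at scale $n^{-\Omega(\epsilon)}$ once $d = (\epsilon/C)^2 \log n$ with $C$ large enough.

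\emph{Property (4).} This is the main work. Arrange the pseudo-moments into the moment matrix $M$ indexed by multilinear monomials of degree $\leq d/2$, so that $\pE_G[f^2] = \langle f, M f\rangle$. The goal is to show $M \succeq \lambda \Pi'_G$ for $\lambda = \Omega((\omega/n)^{d+1})$. The plan is to expand $M$ in the graphical matrix basis $M = \sum_\alpha \lambda_\alpha M_\alpha$, where $\alpha$ ranges over ``ribbons'' (small labelled graphs with distinguished left and right vertex tuples), and then produce an \emph{approximate PSD decomposition}: for each $\alpha$ construct a factorization $M_\alpha \approx L_\alpha Q_\alpha L_\alpha^{\mkern-1.5mu\mathsf{T}}$ where $L_\alpha$ is a ``left-half'' graphical matrix, $Q_\alpha$ is a small ``middle-shape'' matrix, and the approximation error is controlled. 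Summing yields an identity of the form $M = L\,Q\,L^{\mkern-1.5mu\mathsf{T}} + E$ with $L Q L^{\mkern-1.5mu\mathsf{T}} \succeq \lambda \Pi'_G$ and $\|E\| \ll \lambda$. The dominant term $L Q L^{\mkern-1.5mu\mathsf{T}}$ is engineered so that $Q$ on the relevant subspace is lower bounded by $\Omega((\omega/n)^{d+1})$ times identity, which is where the explicit value of $\lambda$ comes from.

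\emph{The main obstacle} is bounding the spectral norms $\|E_\alpha\|$ of the error graphical matrices, which is exactly the step the present paper is designed to avoid. The standard approach is the trace moment method: bound $\mathbb{E}\,\tr((E_\alpha E_\alpha^{\mkern-1.5mu\mathsf{T}})^q)$ by a sum over closed walks in the ribbon shape, and show that the dominant walk contribution yields $\|E_\alpha\| \leq n^{w(\alpha)/2} \cdot n^{o(d)}$ where $w(\alpha)$ is the size of the minimum vertex separator of $\alpha$; this is the ``norm bound'' conjectured in~\cite{DBLP:conf/focs/BarakHKKMP16} and later made rigorous by the works cited in the introduction (e.g.~\cite{8104104,ghosh2020sumofsquares,10.1145/3357713.3384319}). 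With the chosen normalization $\lambda_\alpha \approx (\omega/n)^{|V(\alpha)|/2}$, the minimum-vertex-separator bound together with $\omega = n^{1/2 - \epsilon}$ and $d = (\epsilon/C)^2 \log n$ makes the total error $\|E\| \leq n^{-\Omega(\epsilon d)} \lambda$, which is negligible compared to $\lambda$. Putting this together with the factor-$\Omega((\omega/n)^{d+1})$ eigenvalue lower bound on $L Q L^{\mkern-1.5mu\mathsf{T}}$ restricted to $\mathrm{Image}(\Pi'_G)$ yields $M \succeq \lambda \Pi'_G$ with the claimed $\lambda$, and a union bound over the $\poly(n)$ many shapes with $|V(\alpha)| \leq d$ gives the desired high-probability statement over $G \sim G(n,1/2)$.
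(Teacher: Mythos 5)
You should first note that the paper does not actually prove this statement: it is quoted verbatim as the main theorem of \cite{DBLP:conf/focs/BarakHKKMP16} and used as a black box, and the only property of $\pE_G$ the present paper verifies itself is the additional covering bound $\pE_G[x_i] \geq \frac{\omega}{n}(1-n^{-\Omega(\eps)})$ of \cref{claim:xi}, which is not part of the statement you were asked to prove. Your outline is a faithful high-level description of how the cited work proves it --- pseudo-calibration to define $\pE_G$, direct computation plus concentration for properties (1)--(3), and for property (4) an approximate factorization $M \approx L Q L^{\mkern-1.5mu\mathsf{T}}$ of the moment matrix together with spectral norm bounds on graphical matrices via the trace method --- so it is the same route as the source rather than a genuinely different one. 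As a proof, however, it is only a plan: the two steps carrying essentially all of the difficulty (the minimum-vertex-separator norm bounds on the error terms and the eigenvalue lower bound for $L Q L^{\mkern-1.5mu\mathsf{T}}$ on the image of $\Pi'_G$, which is where $\lambda = \Omega((\omega/n)^{d+1})$ comes from) are named but not carried out, and these occupy the bulk of the original paper. Two smaller inaccuracies: property (3) is not ``automatic'' shape by shape --- each individual character $\chi_T(G) \in \{\pm 1\}$ does not vanish; the vanishing on non-cliques holds because the truncation is by vertex-support size, which preserves the factorization of the clique indicator $\prod_{\{i,j\}\subseteq S}\mathbf{1}[\{i,j\}\in E(G)]$ out of the sum over shapes. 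And the graphical-matrix norm bounds needed for planted clique were proved in \cite{DBLP:conf/focs/BarakHKKMP16} itself (the later works you cite generalize and systematize them); what you describe as a conjecture there is not an accurate attribution.
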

We first observe that if $G \sim G(n,1/2)$, then the complement graph $\bar{G} \sim G(n,1/2)$ also, and moreover $\pE_G$ will satisfy the independent set constraints as $\pE_G[x_S] = 0$ for $S$ that is not a clique in $G$, which is equivalent to $S$ not being an independent set in $\bar{G}$. We also note that $\Pi'_G = \Pi_{\bar{G}}$, and that the final pseudo-expectation is obtained by setting $\pE[x_S] := \pE_G[x_S]/\pE_G[1]$; this is done so that the normalization condition $\pE[1] = 1$ is satisfied. 

We thus see that $\pE$ satisfies the second additional condition of \cref{thm:reduction}. Hence, in order to apply \cref{thm:reduction} to conclude \cref{thm:coloringlb}, it suffices to argue that with high probability over $G$, it holds that $\pE_G[x_i] \geq \frac{\omega}{n}(1 - n^{-\Omega(\eps)})$ for all $i$. Indeed, if this holds then we have $\pE[x_i] \geq \frac{\omega}{n} (1 - n^{-\Omega(\eps)})$ also, and then we can apply \cref{thm:reduction} with $k = \frac{n}{\omega} (1 + n^{-\Omega(\eps)})$ which finishes the proof. Thus, it remains to prove the following claim.

\begin{claim}
\label{claim:xi}
For each $i$, $\pE_G[x_i] \geq \frac{\omega}{n} \cdot (1 - n^{-\Omega(\eps)})$ with probability $1 - n^{-\log n}$.
\end{claim}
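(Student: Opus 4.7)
The approach is to express $\pE_G[x_i]$ as a low-degree Fourier polynomial in the $\pm 1$ edge indicators of $G$, identify its mean under $G(n,1/2)$ as $\omega/n$, and then apply concentration tailored to low-degree polynomials on the hypercube.

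By the pseudo-calibration construction of~\cite{DBLP:conf/focs/BarakHKKMP16}, $\pE_G[x_i]$ is, as a function of $G$, a multilinear polynomial of degree at most $\tau = O(d)$ in the parity characters $\chi_e := 2 \cdot \mathbf{1}[e \in E(G)] - 1$, with Fourier expansion $\pE_G[x_i] = \sum_{\alpha}\hat{\phi}_i(\alpha)\,\chi_\alpha(G)$, where $\chi_\alpha := \prod_{e \in \alpha}\chi_e$ and $\hat{\phi}_i(\alpha) = \E_{\mathrm{pl}}[x_i\chi_\alpha(G)]$ is computed under the planted clique distribution (uniformly random $\omega$-subset $K \subseteq [n]$; edges in $\binom{K}{2}$ set to $+1$; all other edges independent and uniform $\pm 1$). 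Since $\E_{G(n,1/2)}[\chi_\alpha] = 0$ for every $\alpha \neq \emptyset$, the mean is $\E[\pE_G[x_i]] = \hat{\phi}_i(\emptyset) = \Pr_{\mathrm{pl}}[i \in K] = \omega/n$.

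Next I would bound the variance of $\pE_G[x_i]$ under $G(n,1/2)$. Using the planted structure, $\hat{\phi}_i(\alpha) = \Pr_{\mathrm{pl}}[\{i\} \cup V(\alpha) \subseteq K] \leq (\omega/n)^{v}$, where $v := |V(\alpha) \cup \{i\}|$. By Parseval on the $\pm 1$ cube, the variance equals $\sum_{\alpha \neq \emptyset}\hat{\phi}_i(\alpha)^2$; partitioning the sum by $v$ and $e := |\alpha|$ and bounding the count of such $\alpha$ by $\binom{n-1}{v-1}\binom{\binom{v}{2}}{e}$, the dominant contribution comes from $v=2, e=1$, and a short geometric-series calculation yields $\E_{G(n,1/2)}\bigl[(\pE_G[x_i] - \omega/n)^2\bigr] \leq (\omega/n)^2 \cdot n^{-\Omega(\eps)}$.

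For the concentration step, set $p(G) := \pE_G[x_i] - \omega/n$, a mean-zero multilinear polynomial of degree $\tau = O(\eps^2 \log n)$ in independent $\pm 1$ random variables. The Bonami--Beckner hypercontractive inequality gives $\|p\|_q \leq (q-1)^{\tau/2}\,\|p\|_2$ for every $q \geq 2$. Applying Markov's inequality with an optimized choice of $q$, and using the variance bound from the preceding paragraph, yields $\Pr\bigl[|p(G)| > n^{-\Omega(\eps)} \cdot \omega/n\bigr] \leq n^{-\log n}$. Since $\pE_G[x_i] = \omega/n + p(G)$, the claim follows.

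\textbf{Main obstacle.} The principal technical work is the variance estimate: a careful enumeration over Fourier coefficients indexed by small shapes. However, since we only need to sum squared scalar quantities rather than bound spectral norms of graphical matrices as in~\cite{DBLP:conf/focs/BarakHKKMP16}, the calculation is considerably simpler than the spectral analyses in that paper, matching the ``stress-free'' ethos of the present work. A secondary technical point is to verify that the $q$-optimization in the hypercontractivity step yields the tail $n^{-\log n}$ uniformly across the range $\eps \in (\Omega(1/\sqrt{\log n}), 1/2)$; this is precisely where the scaling $d = \Theta(\eps^2 \log n)$ --- which balances the degree of $p$ against the gap between $\|p\|_2$ and the target deviation $n^{-\Omega(\eps)} \cdot \omega/n$ --- enters critically.
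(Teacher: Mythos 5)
Your plan correctly identifies the Fourier structure of $\pE_G[x_i]$ and its mean, and the variance bound you describe is essentially right ($\Var[\pE_G[x_i]]\approx(\omega/n)^2 n^{-2\eps}$, dominated by the $v=2$, single-edge-through-$i$ shapes). The gap is in the concentration step: generic hypercontractivity does not deliver the tail bound here.

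The polynomial $p(G)=\pE_G[x_i]-\omega/n$ is a sum over shapes $T$ with $|V(T)|\le\tau$ where $\tau\le(\eps/C)\log n$, so its degree in the edge characters $\chi_e$ is $\max_T|T|$, which can be as large as $\binom{\tau}{2}=\Theta(\eps^2\log^2 n)$. This is \emph{not} $\tau$, and not $O(d)=O(\eps^2\log n)$; it is larger by a $\log n$ factor, and the distinction matters. To get tail probability $n^{-\log n}$ via Markov on $\|p\|_q^q$, one must take $q=\Theta(\log^2 n)$, and then Bonami--Beckner gives $\|p\|_q\le(q-1)^{D/2}\|p\|_2$ with $D=\binom{\tau}{2}$, i.e.\ a loss factor $\exp\!\bigl(\Theta(D\log\log n)\bigr)=\exp\!\bigl(\Theta(\eps^2\log^2 n\cdot\log\log n)\bigr)$. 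The available slack from the variance bound is only $n^{\Theta(\eps)}=\exp(\Theta(\eps\log n))$, which is overwhelmed as soon as $\eps\log n\log\log n\gg1$ --- that is, for essentially the entire admissible range $\eps\in(\Omega(1/\sqrt{\log n}),1/2)$. Even the level-by-level refinement $\|p\|_q^2\le\sum_j(q-1)^j W_j$ does not save you: the weight at the top level $j=\binom{\tau}{2}$ is roughly $n^{-1-2\eps\tau}$, and $(q-1)^{\binom{\tau}{2}}n^{-2\eps\tau}=\exp\!\bigl(\eps^2\log^2 n\,(\log\log n/C^2-2/C)\bigr)$ diverges for large $n$. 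The root cause is that hypercontractivity is oblivious to the crucial structural fact that every character appearing in $p$ has at most $\tau$ incident \emph{vertices}; it only sees the edge-degree, which is quadratically larger.

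The paper's proof avoids this by never invoking hypercontractivity. It partitions the sum defining $\pE_G[x_i]$ into isomorphism classes of shapes with a fixed vertex count $t$, applies a direct $\ell$-th moment bound (Lemma~4.3) to each class --- which yields $n^{(t-1)\ell/2}(t\ell)^{t\ell}$ because the counting in the moment calculation is driven by the number of \emph{vertices}, not edges --- and then union-bounds over the $\le2^{t^2}$ classes and sums over $t$. This is what lets the argument exploit the sparsity $|V(T)|\le\tau$ of each shape. To salvage your approach you would need a concentration inequality sensitive to this vertex-count structure, which in effect reproduces the paper's shape-by-shape moment method.
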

\begin{proof}
 We have that $\pE_G[x_i] := \sum_{T \subseteq {[n] \choose 2} : \abs{V(T)} \leq \tau} \left(\frac{\omega}{n}\right)^{\abs{V(T) \cup \{i\}}} \chi_T(G)$, where $\tau \leq (\eps/C) \log n$. The $T = \emptyset$ term always contributes $\frac{\omega}{n}$. The other terms all have $\abs{V(T)} \geq 2$. Let $H_1$ be the set of $T$ such that $i \in V(T)$, and let $H_2$ be the set of $T$ such that $i \notin V(T)$. Let $H_{1}^{(t)}$ be the set of $T \in H_1$ with $\abs{V(T)} = t$, and similarly for $H_2^{(t)}$. Each set $H_{1}^{(t)}$ can be partitioned into families $\{\cT_{1,r}^{(t)}\}_{r = 1}^{p_{1,t}}$ where $T$ and $T'$ are in the same family if there is a permutation $\sigma \colon [n] \to [n]$ such that $T = \sigma(T')$, or equivalently if $T$ and $T'$ are isomorphic. Similarly, each set $H_{2}^{(t)}$ can be partitioned into families $\{\cT_{2,r}^{(t)}\}_{r = 1}^{p_{2,t}}$.

We thus have
\begin{flalign*}
\bigabs{\pE_G[x_i] - \frac{\omega}{n}} \leq \sum_{t = 2}^{\tau} \left[\left(\frac{\omega}{n}\right)^{t} \sum_{r = 1}^{p_{1,t}}\bigabs{\sum_{T \in \cT_{1,r}^{(t)}} \chi_T(G)} + \left(\frac{\omega}{n}\right)^{t+1} \sum_{r = 1}^{p_{2,t}}\bigabs{\sum_{T \in \cT_{2,r}^{(t)}} \chi_T(G)}\right] \enspace.
\end{flalign*}

\begin{lemma}
\label{lem:chiconc}
Let $\cT$ be a family of subsets of ${[n] \choose 2}$ such that $\abs{V(T)} = t$ for every $T \in \cT$, and for every $T, T' \in \cT$, there exists $\sigma \colon [n] \to [n]$ such that $T = \sigma(T')$. Let $S = \cap_{T \in \cT} V(T)$. Then for every $s \geq 0$ and even $\ell$, 
\begin{equation*}
\Pr_{G \sim G(n,1/2)}\left[ \bigabs{\sum_{T \in \cT} \chi_T(G)} \leq s\right] \geq 1 - \frac{n^{(t - \abs{S}) \ell/2} \cdot (t\ell)^{t \ell}}{s^{\ell}} \enspace.
\end{equation*}
\end{lemma}
We postpone the proof of \cref{lem:chiconc} to \cref{sec:chiconc}, and now use it to finish the proof of \cref{claim:xi}. Applying \cref{lem:chiconc} with $\ell = (\log n)^2$, we get
\begin{flalign*}
&\bigabs{\sum_{T \in \cT_{1,r}^{(t)}} \chi_T(G)} \leq n^{(t-1)/2} (\log n)^{3t} \text{ with probability } \geq 1 - 2^{-t \log^2 n(\log \log n - \log t)}\\
&\bigabs{\sum_{T \in \cT_{2,r}^{(t)}} \chi_T(G)} \leq n^{t/2} (\log n)^{3t} \text{ with probability } \geq 1 - 2^{-t \log^2 n(\log \log n - \log t)} \enspace.
\end{flalign*}
We observe that $p_{1,t}$ and $p_{2,t}$ are both at most $2^{t^2}$, as an equivalence class with $t$ vertices is uniquely determined by a graph on $t$ vertices. By union bound, we see that the above holds for all equivalence classes $\cT_{1,r}^{(t)}$ and $\cT_{2,r}^{(t)}$ with probability at least $1 - 2 \sum_{t = 2}^{\tau} 2^{t^2 - t \log^2 n(\log \log n - \log t)}$. Since $t \leq \tau \leq (\eps/C) \log n$, it follows that 
\begin{flalign*}
 &\sum_{t = 2}^{\tau} 2^{t^2 - t \log^2 n(\log \log n - \log t)} = \sum_{t = 2}^{\tau} 2^{t(t - \log^2n (\log \log n - \log t))} \leq \sum_{t = 2}^{\tau} 2^{t( \frac{\eps}{C} \log n - (\log^2 n) (\log \log n - \log \frac{\eps}{C} - \log \log n))}\\
 &\leq \tau \cdot 2^{2 \log n \cdot (\frac{\eps}{C} - \log \frac{C}{\eps} \cdot \log n)} \leq n^{- \log n} \enspace,
\end{flalign*}
as $\frac{C}{\eps} \geq C \geq 16$. Thus, with probability at least $1 - n^{-\log n}$, we have
\begin{flalign*}
&\bigabs{\pE_G[x_i] - \frac{\omega}{n}} \leq \sum_{t = 2}^{\tau} \left[\left(\frac{\omega}{n}\right)^{t} 2^{t^2}  n^{(t-1)/2} (\log n)^{3t} + \left(\frac{\omega}{n}\right)^{t+1} 2^{t^2}n^{t/2} (\log n)^{3t}  \right]\\
&\leq \frac{2}{\sqrt{n}} \sum_{t = 2}^{\tau} \left(\frac{\omega}{n}\right)^{t} 2^{t^2}(\log n)^{3t} n^{t/2} = 2 \left(\frac{\omega}{n}\right)  \sum_{t = 2}^{\tau} n^{(t - 1)/2} n^{-(t-1)\eps} 2^{t^2}(\log n)^{3t} n^{t/2} n^{-1/2} \\
&\leq 2 \left(\frac{\omega}{n}\right)  \sum_{t = 2}^{\tau}n^{-(t-1)\eps} 2^{t^2}(\log n)^{3t} \leq \left(\frac{\omega}{n}\right) \cdot \max_{2 \leq t \leq \tau} 2 n^{-(t-1)\eps} 2^{t^2}(\log n)^{3t + 1}\\
& \leq \left(\frac{\omega}{n}\right) \cdot 2 n^{\eps} (\log n)\max_{2 \leq t \leq \tau} (n^{-\eps} 2^{\tau}(\log n)^{3})^t \leq \left(\frac{\omega}{n}\right) \cdot n^{\eps} n^{2\eps/K} \max_{2 \leq t \leq \tau} (n^{-\eps} n^{\eps/C} \cdot n^{3 \eps/K})^t \\
&\leq  \left(\frac{\omega}{n}\right) \cdot n^{\eps} n^{2\eps/K}  n^{-2\eps(1 - 1/C - 3/K)} =  \left(\frac{\omega}{n}\right) \cdot  n^{-\eps (1 - 2/C - 8/K)} \leq  \left(\frac{\omega}{n}\right) \cdot n^{- \eps/2}\enspace,
\end{flalign*}
as $\eps \geq C/\sqrt{\log n} \geq K \log \log n/\log n$ for $K \geq 32$ and $\tau \leq (\eps/C) \log n$. Hence, with probability $1 - 1/n^{\log n}$, we have that $\pE_G[x_i] = \frac{\omega}{n} (1 \pm n^{-\eps/2})$, which completes the proof.
\end{proof}
%%%%%%%%%%%%%%%%%%%%%%%%%%%%%%%%%%%%%%%%%%%%%%%%%%%%%%%%%%%%%%%%%%%%%%%%%%%%%%%%
%%%%%%%%%%%%%%%%%%%%%%%%%%%%%%%%%%%%%%%%%%%%%%%%%%%%%%%%%%%%%%%%%%%%%%%%%%%%%%%%
\subsection{Proof of \cref{lem:chiconc}}
\label{sec:chiconc}
Let $\ell \in \N$ be even. We have that
\begin{flalign*}
\E_{G \sim G(n,1/2)} \bigabs{\sum_{T \in \cT} \chi_T(G)}^{\ell} = \E_{G \sim G(n,1/2)} \big(\sum_{T \in \cT} \chi_T(G)\big)^{\ell} = \sum_{T_1, \dots, T_{\ell} \in \cT} \E_{G \sim G(n,1/2)}\prod_{i = 1}^{\ell} \chi_{T_i}(G) \enspace.
\end{flalign*}
We have that $\E_{G \sim G(n,1/2)}\prod_{i = 1}^{\ell} \chi_{T_i}(G)  = 1$ iff $\bigoplus_{i = 1}^{\ell} T_i = \emptyset$, that is, every edge in the multiset $\cup_{i = 1}^{\ell} T_i$ appears an even number of times, and otherwise the term is $0$. Since every edge in the multiset appears an even number of times, every vertex also appears an even number of times in $\cup_{i = 1}^{\ell} V(T_i)$, and hence every vertex appears at least twice. Since $S \subseteq V(T_i)$ for all $i$, every vertex in $S$ appears exactly $\ell$ times. So, the number of distinct vertices in $\cup_{i = 1}^{\ell} (V(T_i) \setminus S)$ is at most $(t - \abs{S})\cdot \ell/2$. Each tuple $(T_1, \dots, T_{\ell})$ with this property can thus be chosen by \begin{inparaenum}[(1)] \item selecting $(t - \abs{S})\cdot \ell/2$ distinct vertices $S'$ (at most $n^{(t - \abs{S}) \ell/2}$ choices),  and then \item choosing injections $\sigma_i \colon V(T) \to S'$ and setting $T_i = \sigma_i(T)$, where $T \in \cT$ is an arbitrary fixed element (at most $(\abs{S'}^{t})^{\ell} \leq (t \ell)^{t \ell}$ choices)\end{inparaenum}.
Thus, we get $\E_{G \sim G(n,1/2)} \bigabs{\sum_{T \in \cT} \chi_T(G)}^{\ell}  \leq n^{(t - \abs{S}) \ell/2}(t \ell)^{t \ell}$. By Markov's inequality, it follows that $\Pr_{G \sim G(n,1/2)}\left[\bigabs{\sum_{T \in \cT} \chi_T(G)}  > s\right]  = \Pr_{G \sim G(n,1/2)}\left[\bigabs{\sum_{T \in \cT} \chi_T(G)}^{\ell}  > s^{\ell}\right] \leq \frac{n^{(t - \abs{S}) \ell/2}(t \ell)^{t \ell}}{s^{\ell}}$, which completes the proof.

%%%%%%%%%%%%%%%%%%%%%%%%%%%%%%%%%%%%%%%%%%%%%%%%%%%%%%%%%%%%%%%%%%%%%%%%%%%%%%%%
%%%%%%%%%%%%%%%%%%%%%%%%%%%%%%%%%%%%%%%%%%%%%%%%%%%%%%%%%%%%%%%%%%%%%%%%%%%%%%%%
%%%%%%%%%%%%%%%%%%%%%%%%%%%%%%%%%%%%%%%%%%%%%%%%%%%%%%%%%%%%%%%%%%%%%%%%%%%%%%%%
\section*{Acknowledgements}

We thank Xinyu Wu for taking part in early stages of this research, and the anonymous reviewers for providing valuable feedback.

This research was supported in part by:
the NSF CAREER Award (\#2047933),
the NSF Graduate Research Fellowship Program (under Grant No.\ DGE1745016),
and
the ARCS Foundation.
Any opinions, findings, and conclusions or recommendations expressed in this material are those of the author(s) and do not necessarily reflect the views of the National Science Foundation.

\phantomsection

\bibliographystyle{amsalpha}
\bibliography{bib/custom,bib/custom2,bib/mathreview,bib/dblp,bib/scholar}  
  \appendix
  
\section{Satisfying the booleanity, edge and positivity constraints}
\label{appendix:edgepsd}
We prove the following three simple claims.

\begin{claim}
\label{claim:bool}
$\pE^{\otimes \numcolors}$ and $\pE'$ satisfy the booleanity constraints $\{x_{i,c}^2 =  x_{i,c} : i \in [n], c \in [\numcolors]\}$.
\end{claim}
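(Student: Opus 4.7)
The plan is to reduce both statements to the booleanity constraints already satisfied by the underlying independent set pseudo-expectation $\pE$, by exploiting the multiplicative structure of the tensor power $\pE^{\otimes \numcolors}$ on color-factorized monomials.

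First I would handle $\pE^{\otimes \numcolors}$. Fix $i \in [n]$ and $c \in [\numcolors]$, and let $f$ be any polynomial with $\cdeg(f \cdot (x_{i,c}^2 - x_{i,c})) \leq d$. By linearity it suffices to take $f$ to be a monomial, and any such monomial factors as $f = \prod_{c'=1}^{\numcolors} m_{c'}$, where each $m_{c'}$ lives only in the variables of color $c'$. The multiplicative definition of the tensor power then gives
\begin{equation*}
\pE^{\otimes \numcolors}\!\bigl[f \cdot (x_{i,c}^2 - x_{i,c})\bigr] \;=\; \pE\!\bigl[m_c \cdot (x_{i,c}^2 - x_{i,c})\bigr] \cdot \prod_{c' \ne c} \pE[m_{c'}] \enspace,
\end{equation*}
and the first factor vanishes because $\pE$ is an independent set pseudo-expectation and therefore satisfies $x_i^2 = x_i$ as a constraint. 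The only bookkeeping is to check that $\deg(m_c \cdot (x_{i,c}^2 - x_{i,c})) \leq d$, which is exactly what the hypothesis $\cdeg(f \cdot (x_{i,c}^2 - x_{i,c})) \leq d$ buys us on the $c$-th color component.

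Second, the statement for $\pE'$ requires no further work. Since $\pE'$ is defined as the restriction of $\pE^{\otimes \numcolors}$ to polynomials of total degree at most $1 + d/2$, the relevant instance of the booleanity constraint only asks for $\pE'[f \cdot (x_{i,c}^2 - x_{i,c})] = 0$ when $\deg(f) \leq d/2 - 1$. Any such $f$ has $\cdeg(f \cdot (x_{i,c}^2 - x_{i,c})) \leq d/2 + 1 \leq d$, which lies inside the regime already handled for $\pE^{\otimes \numcolors}$.

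I do not expect any real obstacle in this claim: the entire content is that per-color booleanity decouples across the tensor factorization of $\pE^{\otimes \numcolors}$. The only mild subtlety is tracking the degree bookkeeping carefully enough to confirm that the isolated factor $m_c \cdot (x_{i,c}^2 - x_{i,c})$ falls within the degree $d$ regime where $\pE$'s booleanity is guaranteed.
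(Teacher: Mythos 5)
Your proof is correct and follows essentially the same route as the paper: both exploit the product structure of $\pE^{\otimes \numcolors}$ on color-factorized monomials to isolate the $c$-th tensor factor and then invoke the booleanity constraint of the underlying independent set pseudo-expectation $\pE$. The only cosmetic difference is that you phrase it as showing $\pE^{\otimes \numcolors}[f \cdot (x_{i,c}^2 - x_{i,c})] = 0$ directly, while the paper equivalently shows that arbitrary monomials and their multilinear reductions have equal pseudo-expectation; the degree bookkeeping and the restriction argument for $\pE'$ match the paper's as well.
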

\begin{claim}
\label{claim:edge}
$\pE^{\otimes \numcolors}$ and $\pE'$ satisfy the edge constraints $\{x_{i,c} x_{j,c} = 0 : (i,j) \in E(G), c \in [\numcolors]\}$.
\end{claim}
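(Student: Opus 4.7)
The plan is to prove the statement for $\pE^{\otimes \numcolors}$ first and then deduce it for $\pE'$ essentially for free. Since $\pE'$ is the restriction of $\pE^{\otimes \numcolors}$ to polynomials of ordinary degree at most $1 + d/2$, and $\cdeg(g) \leq \deg(g)$ for every polynomial $g$, any $f$ relevant for $\pE'$ (i.e.\ $\deg(f) \leq (1 + d/2) - 2$) automatically satisfies $\cdeg(f \cdot x_{i,c} x_{j,c}) \leq d$, which is the hypothesis I will use for $\pE^{\otimes \numcolors}$. So the real work is only at the level of $\pE^{\otimes \numcolors}$.

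To prove that $\pE^{\otimes \numcolors}$ satisfies the edge constraints, I would show that for every edge $\{i,j\} \in E(G)$, every color $c \in [\numcolors]$, and every polynomial $f$ with $\cdeg(f \cdot x_{i,c} x_{j,c}) \leq d$, it holds that $\pE^{\otimes \numcolors}[f \cdot x_{i,c} x_{j,c}] = 0$. By linearity of $\pE^{\otimes \numcolors}$, it suffices to verify this on a spanning set of such polynomials $f$. The natural spanning set is the ``color-separated'' family: polynomials that factor as $f = \prod_{c' = 1}^{\numcolors} f_{c'}$, where $f_{c'}$ lies in the variables $\{x_{i, c'}\}_{i \in [n]}$ only. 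Every monomial in the variables $\{x_{i,c'}\}_{i,c'}$ factors this way, so such products linearly span $\cP_d^{\otimes \numcolors}$.

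For such a factored $f$, the tensor-power definition of $\pE^{\otimes \numcolors}$ gives
\[
\pE^{\otimes \numcolors}[f \cdot x_{i,c} x_{j,c}] \;=\; \Bigl(\prod_{c' \neq c} \pE[f_{c'}]\Bigr) \cdot \pE[f_c \cdot x_i x_j],
\]
under the identification of $x_{i,c}$ with $x_i$ in the $c$-th tensor factor. The hypothesis $\cdeg(f \cdot x_{i,c} x_{j,c}) \leq d$ specializes on the $c$-th color to $\deg(f_c) + 2 \leq d$, i.e., $\deg(f_c) \leq d - 2$. Since $\{i,j\} \in E(G)$ and $\pE$ satisfies the independent set edge constraint, $\pE[f_c \cdot x_i x_j] = 0$, and hence the whole product vanishes. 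Linearity extends the conclusion to all $f$ in the domain.

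There is essentially no obstacle here; the only care required is the degree bookkeeping to confirm that the coloring-degree hypothesis on $f \cdot x_{i,c} x_{j,c}$ translates to the ordinary-degree bound $\deg(f_c) \leq d - 2$ that the independent set pseudo-expectation $\pE$ requires, and to verify that color-separated products indeed span $\cP_d^{\otimes \numcolors}$ so that linearity suffices. The deduction for $\pE'$ is then immediate, since $\pE'$ is literally $\pE^{\otimes \numcolors}$ restricted to a subspace that lies inside the domain where we have already verified the edge constraint.
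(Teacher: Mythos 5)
Your proof is correct and takes essentially the same approach as the paper: expand in the spanning set of color-separated monomials, factor $\pE^{\otimes \numcolors}$ via the tensor-product structure, and kill the $c$-th factor using the edge constraint of $\pE$ together with the degree bound $\deg(f_c) \leq d-2$. The only cosmetic difference is that the paper first reduces modulo the booleanity constraints (via \cref{claim:bool}) to multilinear monomials before factoring, a step your version shows is unnecessary since $\pE[f_c \cdot x_i x_j] = 0$ already holds for arbitrary $f_c$ of degree at most $d-2$.
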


\begin{claim}
\label{claim:psd}
$\pE^{\otimes \numcolors}$ and $\pE'$ satisfy the positivity constraint.
\end{claim}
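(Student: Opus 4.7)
The plan is to exploit the tensor-product definition of $\pE^{\otimes \numcolors}$ together with the fact that a Kronecker product of positive semidefinite matrices is positive semidefinite. The whole argument is essentially linear algebra on moment matrices and should be much shorter than the other claims in this section.

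Concretely, I would work in the natural monomial basis for the space $\cP_{d/2}^{\otimes \numcolors}$ of coloring-degree-$\leq d/2$ polynomials: the basis elements are products $\prod_{c=1}^{\numcolors} m_c$ where each $m_c$ is a monomial in the color-$c$ variables $\{x_{i,c}\}_{i \in [n]}$ of degree at most $d/2$. Let $M$ denote the moment matrix of $\pE$ on degree-$\leq d/2$ monomials, so $M[m, m'] = \pE[m m']$. By positivity of the independent-set pseudo-expectation $\pE$, $M \succeq 0$. The key observation is that, in this basis, the moment matrix of $\pE^{\otimes \numcolors}$ restricted to coloring-degree-$\leq d/2$ polynomials is exactly the $\numcolors$-fold Kronecker tensor $M^{\otimes \numcolors}$: by the definition of $\pE^{\otimes \numcolors}$ on products, and because the color-$c$ variables are disjoint across $c$,
\[
\pE^{\otimes \numcolors}\Bigl[\prod_c m_c \cdot \prod_c m'_c \Bigr] \;=\; \prod_c \pE[m_c m'_c] \;=\; \prod_c M[m_c, m'_c] \mper
\]
Kronecker products preserve positive semidefiniteness, so $M^{\otimes \numcolors} \succeq 0$, and hence for any $f$ with $\cdeg(f) \leq d/2$, writing $f$ as a coefficient vector $\vec f$ in this basis gives $\pE^{\otimes \numcolors}[f^2] = \vec f^{\top} M^{\otimes \numcolors} \vec f \geq 0$.

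Positivity for $\pE'$ then follows immediately: $\pE'$ is a degree-$d'$ pseudo-expectation with $d' = 1 + d/2$, so its positivity condition only demands $\pE'[f^2] \geq 0$ for $\deg(f) \leq d'/2$, and any such $f$ satisfies $\cdeg(f) \leq \deg(f) \leq d'/2 \leq d/2$, putting $f$ in the positivity range of $\pE^{\otimes \numcolors}$ verified above. I do not expect any substantive obstacle here; the only bookkeeping is to verify that the disjointness of the variable sets across colors makes the tensorization on the polynomial side line up with the Kronecker product on the moment-matrix side, which is routine. A minor subtlety worth double-checking is that the entries $\pE[m_c m'_c]$ are well-defined for all basis pairs, i.e., that $\deg(m_c m'_c) \leq d$, but this is immediate from $\deg(m_c), \deg(m'_c) \leq d/2$.
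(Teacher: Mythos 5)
Your proposal is correct and takes essentially the same approach as the paper: identify the moment matrix of $\pE^{\otimes \numcolors}$ on coloring-degree $\leq d/2$ polynomials as the $\numcolors$-fold Kronecker power $\M^{\otimes \numcolors}$, invoke the fact that Kronecker products preserve positive semidefiniteness, and observe that positivity for $\pE'$ follows since it is a restriction of $\pE^{\otimes \numcolors}$ to a smaller domain.
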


\begin{proof}[Proof of \cref{claim:bool}]
Since $\pE'$ is obtained by restricting $\pE^{\otimes \numcolors}$ to a smaller domain, it suffices to show that $\pE^{\otimes \numcolors}$ satisfies the constraints. We observe that $\pE^{\otimes \numcolors}$ satisfies the above constraints if and only if for all monomials $\prod_{c = 1}^\numcolors \prod_{i \in S_c} x_{i,c}^{\alpha_{i,c}}$ (where each $\alpha_{i,c} \geq 1$), it holds that $\pE^{\otimes \numcolors}[\prod_{c = 1}^\numcolors \prod_{i \in S_c} x_{i,c}^{\alpha_{i,c}}] = \pE^{\otimes \numcolors}[\prod_{c = 1}^\numcolors \prod_{i \in S_c} x_{i,c}]$. We have that $\pE^{\otimes \numcolors}[\prod_{c = 1}^\numcolors \prod_{i \in S_c} x_{i,c}^{\alpha_{i,c}}] = \prod_{c = 1}^\numcolors \pE[\prod_{i \in S_c} x_{i}^{\alpha_{i,c}}] = \prod_{c = 1}^\numcolors \pE[\prod_{i \in S_c} x_{i}] = \pE^{\otimes \numcolors}[\prod_{c = 1}^\numcolors \prod_{i \in S_c} x_{i,c}]$, as $\pE$ satisfies the constraints $x_i^2 = x_i$, and so we are done.
\end{proof}

\begin{proof}[Proof of \cref{claim:edge}]
Since $\pE'$ is obtained by restricting $\pE^{\otimes \numcolors}$ to a smaller domain, it suffices to show that $\pE^{\otimes \numcolors}$ satisfies the constraints. We observe that $\pE^{\otimes \numcolors}$ satisfies the above constraints if and only if for all multilinear monomials $\prod_{c = 1}^\numcolors x_{S_c,c}$ of coloring degree at most $d-2$, it holds that $\pE^{\otimes \numcolors} [x_{i,c} x_{j,c} \prod_{c' = 1}^\numcolors x_{S_c', c'}] = 0$. This is because by \cref{claim:bool}, we can reduce any polynomial modulo the booleanity constraints to make it multilinear. Using the tensor product structure, we have $\pE^{\otimes \numcolors} [x_{i,c} x_{j,c} \prod_{c' = 1}^\numcolors x_{S_c', c'}] = \prod_{c' \ne c} \pE[x_{S_{c'}}] \cdot \pE[x_{S_c} x_i x_j] = \prod_{c' \ne c} \pE[x_{S_{c'}}] \cdot 0 = 0$, since $\pE$ satisfies the edge constraints. This completes the proof.
\end{proof}

\begin{proof}[Proof of \cref{claim:psd}]
Since $\pE'$ is obtained by restricting $\pE^{\otimes \numcolors}$ to a smaller domain, it suffices to prove the claim only for $\pE^{\otimes \numcolors}$. Let $\M$ be the moment matrix of $\pE$. That is, $\M$ is the matrix indexed by sets $(S,T)$ with $\abs{S}, \abs{T} \leq d/2$ and $\M(S,T) := \pE[x_S x_T]$. We note that for any $f \in \cP^n_{d/2}$, $\pE[f^2] = f^{\top} M f$, where we interpret $f$ as a vector of coefficients in the second expression. The moment matrix of $\pE^{\otimes \numcolors}$ is indexed by tuples of sets $((S_1, \dots, S_\numcolors), (T_1, \dots, T_\numcolors))$ where $\abs{S_c}, \abs{T_c} \leq d/2$ for all $c \in [\numcolors]$. We observe that the moment matrix of $\pE^{\otimes \numcolors}$ is $\M^{\otimes \numcolors}$, as the $((S_1, \dots, S_\numcolors), (T_1, \dots, T_\numcolors))$-th entry is $\pE^{\otimes \numcolors}[\prod_{c = 1}^\numcolors x_{S_c, c} x_{T_c, c}] = \prod_{c = 1}^\numcolors \pE[x_{S_c} x_{T_c}] = \prod_{c = 1}^\numcolors \M(S_c, T_c)$. We also note that for any $f$ with $\cdeg(f) \leq d/2$, it holds that $\pE^{\otimes \numcolors}[f^2]  = f^{\top} \M^{\otimes \numcolors}f \geq 0$, as the tensor product of a positive semidefinite matrix is also positive semidefinite. This shows that $\pE^{\otimes \numcolors}[f^2] \geq 0$ for all $f$ with $\cdeg(f) \leq d/2$, which finishes the proof.
\end{proof}

%%%%%%%%%%%%%%%%%%%%%%%%%%%%%%%%%%%%%%%%%%%%%%%%%%%%%%%%%%%%%%%%%%%%%%%%%%%%%%%%
%%%%%%%%%%%%%%%%%%%%%%%%%%%%%%%%%%%%%%%%%%%%%%%%%%%%%%%%%%%%%%%%%%%%%%%%%%%%%%%%
\section{Tightness of degree in \cref{thm:coloringlb}}
\label{sec:independentsetrefutation}
In this section, we prove the following lemma, showing that the upper bound on $d$ in \cref{thm:coloringlb} is tight up to constant factors.
\begin{lemma}
With high probability over $G \sim G(n,1/2)$, there is no degree $8(1 + o(1)) \log_2 n$ coloring pseudo-expectation for $G$ using $\numcolors \leq \frac{n}{e \cdot 2(1 + o(1))\log_2 n}$ colors.
\end{lemma}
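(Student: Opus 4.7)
The plan is a two-step argument: first extract a large pseudo-independent set from a hypothetical low-degree coloring pseudo-expectation via the sum constraints, then refute the existence of such an object using a first-moment SoS calculation on the same graph.

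\textbf{Step 1 (reduction to independent set).} Let $\pE$ be a degree-$d$ coloring pseudo-expectation for $G$ using $\numcolors$ colors. Applying the sum constraints with the trivial multiplier $f = 1$ gives $\pE[\sum_c x_{i,c} - 1] \geq 0$ for every $i$, and summing over $i$ yields $\sum_c \pE[\sum_i x_{i,c}] \geq n$. By pigeonhole some color $c^{*} \in [\numcolors]$ satisfies $\pE[\sum_i x_{i,c^{*}}] \geq n/\numcolors$. The restriction $\pE'[p(x_1, \ldots, x_n)] := \pE[p(x_{1,c^{*}}, \ldots, x_{n,c^{*}})]$ is then a degree-$d$ independent-set pseudo-expectation for $G$ with $\omega := \pE'[\sum_i x_i] \geq n/\numcolors \geq 2e(1 + o(1))\log_2 n$ under the assumed bound on $\numcolors$.

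\textbf{Step 2 (moment upper bound on $\omega$).} Set $g := \sum_i x_i$. For $d$ a power of $2$ (take the nearest power of $2$ below $8(1+o(1))\log_2 n$), iterated SoS Cauchy--Schwarz along $p = 1, g, g^2, g^4, \ldots, g^{d/2}$ yields $\pE'[g^d] \geq \omega^d$. On the other hand, expanding $g^d$ as a sum of monomials and reducing modulo the Booleanity and independent-set constraints gives
\begin{equation*}
\pE'[g^d] \;=\; \sum_{s=1}^{\alpha(G)} s!\, S(d,s) \sum_{\substack{S \subseteq [n],\ |S|=s \\ S\text{ indep. in }G}} \pE'[x_S],
\end{equation*}
where $S(d,s)$ is the Stirling number of the second kind, $s!\,S(d,s)$ equals the number of surjections $[d] \to [s]$, and $\alpha(G)$ is the independence number of $G$. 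Bounding $\pE'[x_S] \leq 1$ (from Booleanity and $\pE'[x_S]^2 \leq \pE'[x_S^2]\pE'[1] = \pE'[x_S]$) together with the standard first-moment estimates $\alpha_s(G) \leq (1+o(1))\binom{n}{s} 2^{-\binom{s}{2}}$ and $\alpha(G) \leq 2(1+o(1))\log_2 n$ (both with high probability over $G \sim G(n,1/2)$, and deducible by Markov plus a union bound, analogously to the computation in \cref{sec:chiconc}) controls the right-hand side by $(2e(1+o(1))\log_2 n)^d$. Combining with $\pE'[g^d] \geq \omega^d$ then yields $\omega \leq 2e(1+o(1))\log_2 n$, contradicting Step~1.

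\textbf{Main obstacle.} The technical heart of the proof is the first-moment computation in Step~2: the sum runs over all $s \in [1, \alpha(G)]$, and one must argue that the small-$s$ terms (where $\alpha_s(G) \approx \binom{n}{s} 2^{-\binom{s}{2}}$ is very large) do not overwhelm the large-$s$ terms near $s \approx \alpha(G)$ (where $\alpha_s(G)$ is $O(1)$ whp but the surjection count $s!\,S(d,s)$ is close to $s^d$). It is precisely this balance that pins down both the degree bound $d = 8(1+o(1))\log_2 n$ -- needed so that the growth of $s!\,S(d,s)$ with $s$ is steep enough to suppress the small-$s$ contributions -- and the constant $e$ in the threshold on $\numcolors$, which emerges from Stirling's approximation applied to the Stirling numbers of the second kind in the relevant regime $d/s \approx 4$.
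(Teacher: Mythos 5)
Your Step 1 is sound and in fact slightly cleaner than the paper's: the paper first symmetrizes the coloring pseudo-expectation over all permutations of the $\numcolors$ colors to get $\pE''[x_{i,1}] \geq 1/\numcolors$ for every $i$, then reads off a single color; you avoid the averaging entirely by summing the sum constraints and pigeonholing to find one good color $c^{*}$. Both routes yield a degree-$d$ independent-set pseudo-expectation $\pE'$ with $\pE'[\sum_i x_i] = \omega \geq n/\numcolors \geq et$, where $t$ upper-bounds the independence number of $G$, and that is all that is used downstream.

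Step 2, however, does not work as written, and the failure is quantitative, not cosmetic. You compare the Cauchy--Schwarz lower bound $\pE'[g^d] \geq \omega^d$ against the absolute upper bound $\pE'[g^d] \leq \sum_{s} s!\,S(d,s)\,\alpha_s(G)$ obtained from $\pE'[x_S] \leq 1$. The right-hand side is not $O(\omega^d)$ -- it is super-polynomially larger. Set $m = \log_2 n$ and $d = 8m$ and look at $s \approx m$: the number of independent sets of size $m$ in $G(n,1/2)$ is $\alpha_m(G) \approx \binom{n}{m}2^{-\binom{m}{2}} = 2^{m^2/2 - m\log_2 m + O(m)}$, while the surjection count is $m!\,S(8m,m) = 2^{8m\log_2 m + O(m)}$, so the single term $T_m = 2^{m^2/2 + 7m\log_2 m + O(m)}$. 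Meanwhile $\omega^d = (2em)^{8m} = 2^{8m\log_2 m + O(m)}$. Hence $T_m/\omega^d = 2^{(1-o(1))m^2/2} = n^{\Theta(\log n)}$: the ``small-$s$'' terms you flag as the main obstacle really do blow up, and by a factor that no adjustment of the degree constant or the factor $e$ in the color threshold can absorb, because $\alpha_s(G)$ genuinely is this large around $s = \log_2 n$ with high probability. The bound $\pE'[x_S] \leq 1$, multiplied against a raw independent-set count, throws away the normalization $\pE'[1]=1$ and is simply too lossy to close the gap.

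The paper's proof of this lemma takes a structurally different route that never counts independent sets and never upper-bounds $\pE[x_S]$. It writes both $\pE[f^{2^{\ell}}] = \sum_{S} c'_S\,\pE[x_S]$ and $\pE[f^{2^{\ell-1}}] = \sum_{S} c_S\,\pE[x_S]$, the sums ranging over independent sets $S$ with $|S| \leq t$, and then proves the deterministic, coefficient-level inequality $c'_S/c_S < \omega^{2^{\ell-1}}$ for \emph{every} $S$, using only $|S| \leq t \leq 2^{\ell-1}$ and $\omega \geq e|S|$ via Stirling ($c'_S \leq s^{2^\ell}$ versus $c_S \geq s^{2^{\ell-1}-s}\,s!$). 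Since $\pE[x_S] \geq 0$ for every $S$, this yields $\pE[f^{2^\ell}] < \omega^{2^{\ell-1}}\pE[f^{2^{\ell-1}}]$, which contradicts Cauchy--Schwarz, with zero dependence on how many independent sets $G$ has of any given size. The idea you are missing is this term-by-term coefficient comparison: it replaces the analytic estimate you attempted by a clean combinatorial inequality, and that is where the degree $\sim 4t$ and the constant $e$ actually come from.
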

Let $t = 2(1 + o(1)) \log_2 n$. We show that with high probability over $G \sim G(n,1/2)$, there is no degree $4t$ coloring pseudo-expectation for $G$ using $\numcolors \leq \frac{n}{e t}$ colors. We first observe that with high probability, the maximum independent set in $G$ has size at most $t$. Suppose that we draw $G \sim G(n,1/2)$ such that this holds, and suppose that such a pseudo-expectation $\pE'$ exists. We observe that there is a natural action of permutations $\sigma \colon [\numcolors] \to [\numcolors]$ on $\pE'$, given by $\pE'^{(\sigma)}[\prod_{c = 1}^\numcolors \prod_{i \in S_c} x_{i,c}] := \pE'[\prod_{c = 1}^\numcolors \prod_{i \in S_c} x_{i, \sigma(c)}]$. Let $\pE'' := \E_{\sigma} \pE'^{(\sigma)}$ be the pseudo-expectation obtained by averaging over all $\sigma$. We then have that $\pE''$ satisfies the coloring constraints and is symmetric with respect to the color classes, e.g.\ that $\pE''[x_{i,c}] = \pE''[x_{i,c'}]$ for all $c,c' \in [\numcolors]$. This implies that $\pE''[x_{i,1}] = \frac{1}{\numcolors} \sum_{c = 1}^\numcolors \pE''[x_{i,c}] \geq \frac{1}{\numcolors} \cdot 1$. Let $\pE$ be the projection of $\pE''$ onto the first color, so that $\pE[\prod_{i \in S} x_i] := \pE''[\prod_{i \in S} x_{i,1}]$. We then see that $\pE$ is a degree $4t$ independent set pseudo-expectation with $\pE[\sum_i x_i] \geq \omega$, where $\omega := \frac{n}{\numcolors} \geq e t$.

To complete the proof, we show the following lemma.
\begin{lemma}
Suppose that the maximum independent set in $G$ has size $\leq t$. Then there is no degree $4t$ independent set pseudo-expectation $\pE$ for $G$ with $\pE[\sum_i x_i] = \omega \geq e t$.
\end{lemma}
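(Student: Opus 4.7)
I will prove the stronger statement that $\pE[\sum_i x_i] \leq t$ by induction on $t$; this is more than enough since $et > t$. The central device is the standard \emph{conditioning} of pseudo-expectations: for each vertex $v$ with $\pE[x_v] > 0$, set $\pE_v[f] := \pE[x_v f]/\pE[x_v]$. Using $x_v \equiv x_v^2$ modulo the booleanity constraints, we have $\pE_v[g^2] = \pE[(x_v g)^2]/\pE[x_v] \geq 0$ whenever $\deg(x_v g) \leq d/2$, so $\pE_v$ is a valid pseudo-expectation whose positivity degree is $2$ less than that of $\pE$. The edge constraints on $\pE$ transfer directly to $\pE_v$, and additionally $\pE_v[x_v] = 1$ and $\pE_v[x_i] = 0$ for every neighbor $i$ of $v$ in $G$.

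For the inductive step, restrict $\pE_v$ to the variables indexed by the non-neighborhood $\bar N(v)$ of $v$. This restriction is a valid degree-$(d-2)$ pseudo-expectation for the induced subgraph $G[\bar N(v)]$. Since $v$ is non-adjacent to every vertex of $\bar N(v)$, any independent set in $G[\bar N(v)]$ extends by $v$ to an independent set in $G$, so the maximum independent set in $G[\bar N(v)]$ has size at most $t-1$. The induction hypothesis therefore gives $\pE_v[\sum_{i \in \bar N(v)} x_i] \leq t - 1$, whence $\pE_v[Y] \leq 1 + (t-1) = t$ with $Y := \sum_i x_i$, or equivalently $\pE[x_v Y] \leq t \cdot \pE[x_v]$. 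The same inequality is trivial when $\pE[x_v] = 0$, since then $\pE[x_v f] = 0$ for every $f$ by Cauchy-Schwarz. Summing over $v$ and using $\sum_v x_v = Y$ yields $\pE[Y^2] \leq t \cdot \pE[Y]$, and combining with $\pE[Y]^2 \leq \pE[Y^2]$ (Cauchy-Schwarz again) gives $\pE[Y] \leq t$.

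The base case $t = 1$ is immediate: then $G$ is a clique (or empty), so $Y^2 \equiv Y$ modulo the edge constraints, giving $\pE[Y^2] = \pE[Y]$ and hence $\pE[Y] \leq 1$. Each round of conditioning consumes $2$ units of degree, so the recursion is valid as long as $d \geq 2t$; this is amply satisfied by $d = 4t$ for $t \geq 1$. We conclude $\pE[Y] \leq t < et \leq \omega$, contradicting $\pE[\sum_i x_i] = \omega$. The only real care required is the bookkeeping verifying that the restriction of $\pE_v$ to variables in $\bar N(v)$ inherits the independent set axioms for $G[\bar N(v)]$, which is routine since the booleanity and edge constraints of $\pE$ transfer directly.
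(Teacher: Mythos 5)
Your proof is correct, and it takes a genuinely different route from the paper's. The paper argues by contradiction via an iterated Cauchy--Schwarz chain: it compares the coefficient of each independent-set monomial $x_S$ in the expansion of $(\sum_i x_i)^{2^{\ell}}$ against the corresponding coefficient in $(\sum_i x_i)^{2^{\ell-1}}$, and a Stirling estimate shows the ratio is always strictly below $\omega^{2^{\ell-1}}$ once $\omega \geq et$, contradicting $\pE[f^{2^\ell}] \geq \pE[f^{2^{\ell-1}}]\,\omega^{2^{\ell-1}}$. You instead use the classical \emph{conditioning} device $\pE_v[\cdot] = \pE[x_v\,\cdot\,]/\pE[x_v]$ together with induction on $t$, proving the sharper bound $\pE[\sum_i x_i] \leq t$. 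The degree bookkeeping checks out: each conditioning step costs two units, the base case $t=1$ (a clique) needs degree $2$, so degree $2t$ suffices, comfortably under the stated $4t$; the edge case $\pE[x_v]=0$ is correctly dispatched by Cauchy--Schwarz, and the $t-1$ bound on the max independent set in $G[\bar N(v)]$ is right since $v \notin \bar N(v)$ and $v$ is non-adjacent to everything in $\bar N(v)$. Your argument buys both a tighter conclusion ($t$ rather than $et$) and a smaller degree requirement, and it is the more standard and more transparent technique (it is essentially the protection/weighted-averaging argument familiar from LP/SDP hierarchy rank lower bounds); the paper's approach is a more ad hoc but also self-contained computation that avoids any appeal to conditioning.
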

\begin{proof}
Suppose that such a pseudo-expectation $\pE$ exists. Let $f = \sum_i x_i$, and let $\ell \in \N$ be the smallest integer so that $2^{\ell} \geq 2t$. Note that $2^{\ell} \leq 4t$ must hold also. By Cauchy-Schwarz, we have
\begin{flalign*}
&\pE[f^{2^{\ell}}] \geq (\pE[f^{2^{\ell-1}}])^2 \enspace, \\
&\pE[f^{2^{\ell - 1}}] \geq \pE[f^{2^{\ell - 2}}]^2 \geq \dots \geq \pE[f]^{2^{\ell-1}} \enspace, \\
&\implies \pE[f^{2^{\ell}}] \geq (\pE[f^{2^{\ell-1}}]) \cdot (\pE[f])^{2^{\ell - 1}} = \pE[f^{2^{\ell-1}}]  \cdot \omega^{2^{\ell - 1}} \enspace.
\end{flalign*}
Note that each polynomial above has degree at most $2^{\ell} \leq 4t$, so the above pseudo-expectations are all well-defined. Now, we observe that 
\begin{flalign*}
&\pE[f^{2^{\ell - 1}}] = \pE[\sum_{S \subseteq [n] : \abs{S} \leq 2^{\ell - 1}} c_S x_S] =\sum_{S : \abs{S} \leq t, \text{ $S$ indep set in $G$}} c_S \pE[x_S] \enspace, \\
&\pE[f^{2^\ell}] = \sum_{S : \abs{S} \leq t, \text{ $S$ indep set in $G$}} c'_S \pE[x_S]
\end{flalign*}
where the coefficients $c_S$ and $c'_S$ are each nonnegative integers. Notice that $c'_S \leq \abs{S}^{2^{\ell}}$, as every contribution to $x_S$ is made by choosing an $i \in S$ from each of the $\sum_i x_i$ factors. We also observe that $c_S \geq \abs{S}^{2^{\ell - 1} - \abs{S}} \cdot (\abs{S}!)$, as we can choose each $i \in S$ exactly once from the first $\abs{S}$ factors, and then select an arbitrary $i \in S$ from the remaining $2^{\ell - 1} - \abs{S}$ factors. Note that here we use the fact that $\abs{S} \leq t \leq 2^{\ell - 1}$ always holds.
Fix $S$, and let $s = \abs{S}$. We observe that
\begin{flalign*}
\frac{c'_S}{c_S} \leq \frac{s^{2^{\ell}}}{s^{2^{\ell - 1} - s} \cdot s!} \leq s^{2^{\ell - 1}} \cdot s^s \cdot \frac{1}{\sqrt{2 \pi} \cdot s^{s + \frac{1}{2}} e^{-s}} < s^{2^{\ell - 1}} \cdot e^s \leq (e \cdot s)^{2^{\ell - 1}} \leq \omega^{2^{\ell - 1}} \enspace,
\end{flalign*}
using Stirling's approximation and the fact that $\omega \geq e t \geq e s$. It therefore follows that $(c'_S - c_S \omega^{2^{\ell - 1}}) \pE[x_S] < 0$. Hence,
\begin{flalign*}
\pE[f^{2^\ell}] - \pE[f^{2^{\ell - 1}}] \cdot \omega^{2^{\ell - 1}} = \sum_{S : \abs{S} \leq t, \text{ $S$ indep set in $G$}} (c'_S - c_S \omega^{2^{\ell-1}}) \pE[x_S] < 0 \enspace,
\end{flalign*}
which is a contradiction. 
\end{proof}

%%%%%%%%%%%%%%%%%%%%%%%%%%%%%%%%%%%%%%%%%%%%%%%%%%%%%%%%%%%%%%%%%%%%%%%%%%%%%%%%
%%%%%%%%%%%%%%%%%%%%%%%%%%%%%%%%%%%%%%%%%%%%%%%%%%%%%%%%%%%%%%%%%%%%%%%%%%%%%%%%
%%%%%%%%%%%%%%%%%%%%%%%%%%%%%%%%%%%%%%%%%%%%%%%%%%%%%%%%%%%%%%%%%%%%%%%%%%%%%%%%
\end{document}